\crefname{section}{Sect.}{Sects.}
\crefname{lemma}{Lem.}{Lemms.}
\crefname{theorem}{Thm.}{Thms.}
\renewcommand{\orcidID}[1]{\href{#1}{\includegraphics[height=9pt]{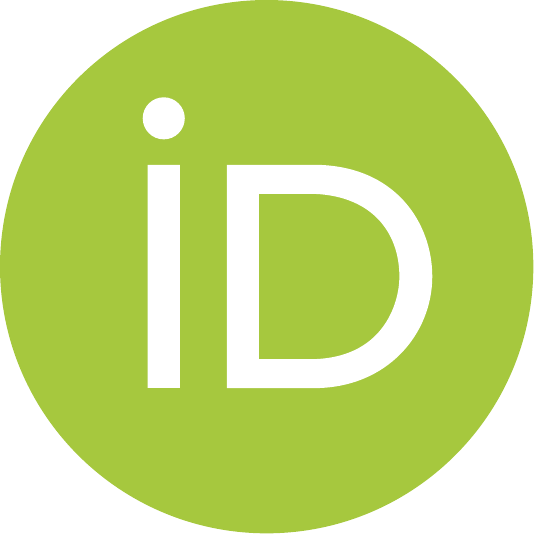}}}
\definecolor{cispa-blue}{HTML}{009CC4}
\definecolor{cispa-blue-80}{HTML}{33B0D0}
\definecolor{cispa-blue-60}{HTML}{66C4DC}
\definecolor{cispa-blue-40}{HTML}{99D7E7}
\definecolor{cispa-blue-20}{HTML}{D5ECF6}
\definecolor{cispa-dark-blue}{HTML}{002864}
\definecolor{cispa-medium-blue}{HTML}{018DCA}
\definecolor{cispa-bright-blue}{HTML}{00C8FF}
\definecolor{cispa-anthracite}{HTML}{1C1C1C}
\definecolor{cispa-dark-gray}{HTML}{2B2B2B}
\definecolor{cispa-light-gray}{HTML}{C9C9C9}
\definecolor{cispa-grid-gray-light}{HTML}{E8E8E8}
\definecolor{cispa-offwhite}{HTML}{F5F5F5}
\definecolor{cispa-green}{HTML}{008040}
\definecolor{cispa-red}{HTML}{EB3300}
\definecolor{cispa-orange}{HTML}{FF7F32}
\definecolor{cispa-yellow}{HTML}{FFC600}
\tikzstyle{lts}=[auto, ->, >=stealth', font=\small]
\tikzstyle{every picture}=[style=lts]
\tikzstyle{every node}=[initial text=]
\tikzstyle{location}=[
\tikzstyle{locguard}=[
\tikzstyle{final}=[double]
\tikzstyle{init}=[initial, initial distance=0.35cm]
\newcommand{\csystem}{\mathcal{C}} %
\newcommand{\csystemdef}{(C, Q, \mathcal{T})} %
\newcommand{\csystemabstract}{\csystem_\alpha} %
\newcommand{\Dmin}{\Delta_{\text{min}}}  %
\newcommand{\N}{\mathbb{N}}                    %
\newcommand{\set}[1]{\left\{#1\right\}}        %
\newcommand{\trans}[1]{\xrightarrow{#1}}       %
\newcommand{\supp}[1]{{\llbracket#1\rrbracket}}  %
\newcommand{\PSPACE}{\textsf{PSPACE}\xspace}
\newcommand{\NP}{\textsf{NP}\xspace}
\newcommand{\Polytime}{\textsf{PTIME}\xspace}
\newcommand{\NPSPACE}{\textsf{NPSPACE}\xspace}
\newcommand{\wqo}{\preceq_0}                    %
\newcommand{\control}{\ensuremath{\mathsf{ctrl}}}
\newcommand{\card}[1]{{\left| {#1} \right|}}
\newcommand{\traces}[1]{\ensuremath{\mathsf{Traces}(#1)}}
\newcommand{\tracesinf}[1]{\ensuremath{\mathsf{Traces}_\infty(#1)}}
\newcommand{\conf}[1]{\ensuremath{\mathbf{#1}}}
\newcommand{\absconf}[1]{\ensuremath{\mathbf{#1^\alpha}}}
\newcommand{\config}{\ensuremath{(c,\conf{v})}\xspace}
\newcommand{\configPrime}{\ensuremath{(c',\conf{v}')}\xspace}
\newcommand{\loc}{\ensuremath{q}}
\newcommand{\Loc}{\ensuremath{Q}}
\newcommand{\initloc}{\ensuremath{\hat{\loc}}}
\newcommand{\pre}[1]{\mathit{pre}(#1)}
\newcommand{\post}[1]{\mathit{post}(#1)}
\newcommand{\smartpar}[1]{\medskip \noindent \textbf{#1}.}
\newcommand{\existsGuard}{\ensuremath{G_\exists}}
\newcommand{\forallGuard}{\ensuremath{G_\forall}}
\newcommand{\allGuard}{\ensuremath{\forallGuard}}
\newcommand{\act}{\ensuremath{a}\xspace}
\newcommand{\disjunctive}[1]{disjunctive system}
\newcommand{\node}{\ensuremath{v}}
\newcommand{\Transitions}{\ensuremath{\mathcal{T}}}
\newcommand{\autom}{\ensuremath{M}}
\newcommand{\automStates}{\ensuremath{\Loc}}
\newcommand{\automAlphabet}{\ensuremath{\Sigma_{\autom}}}
\newcommand{\automTrans}{\ensuremath{\Transitions}}
\newcommand{\automInitState}{\ensuremath{q^{\mathsf{init}}}}
\newcommand{\automAcceptStates}{\ensuremath{Q^{\mathsf{final}}}}
\newcommand{\word}{\ensuremath{w}}
\begin{document}
\title{Parameterized Verification of Systems with Precise (0,1)-Counter Abstraction}
\titlerunning{Parameterized Systems with Precise (0,1)-Counter Abstraction}

\author{%
	Paul Eichler\inst{1}\orcidID{https://orcid.org/0009-0008-6117-318X} \and
	Swen Jacobs\inst{1}\orcidID{https://orcid.org/0000-0002-9051-4050} \and
	Chana {Weil-Kennedy}\inst{2}\orcidID{https://orcid.org/0000-0002-1351-8824}
}
\authorrunning{P. Eichler, S. Jacobs, C. {Weil-Kennedy}}

\institute{
	CISPA Helmholtz Center for Information Security, Saarbr\"ucken, Germany
	\email{\{\href{mailto:paul.eichler@cispa.de}{paul.eichler}, \href{mailto:jacobs@cispa.de}{jacobs}\}@cispa.de}\\
	IMDEA Software Institute, Madrid, Spain\\
	\email{\href{mailto:chana.weilkennedy@imdea.org}{chana.weilkennedy@imdea.org}}
}

\maketitle

\begin{abstract}
	We introduce a new framework for verifying systems with a parametric number of concurrently running processes. The systems we consider are well-structured with respect to a specific well-quasi order. This allows us to decide a wide range of verification problems, including control-state reachability, coverability, and target, in a fixed finite abstraction of the infinite state-space, called a 01-counter system. We show that several systems from the parameterized verification literature fall into this class, including reconfigurable broadcast networks (or systems with lossy broadcast), \disjunctive{}s, synchronizations and systems with a fixed number of shared finite-domain variables. Our framework provides a simple and unified explanation for the properties of these systems, which have so far been investigated separately. Additionally, it extends and improves on a range of the existing results, and gives rise to other systems with similar properties.

	\keywords{Parameterized Verification \and Finite Abstraction}
\end{abstract}

\section{Introduction}
\label{sec:introduction}

Concurrent systems often consist of an arbitrary number of uniform user processes running in parallel, 
possibly with a distinguished controller process. 
Given a description of the user and controller protocols and a desired property, the \emph{parameterized model checking problem} (PMCP) is to decide whether the property holds in the system, regardless of the number of user processes.
The PMCP is well-known to be undecidable in general~\cite{Apt86}, even when the property is control-state reachability and all processes are finite-state~\cite{Suzuki88}.
However, a long line of research has valiantly strived for the identification of decidable fragments that support interesting models and properties~\cite{GermanS92,DBLP:conf/popl/EmersonN95,abdulla1996general,EsparzaFM99,EmersonK00,Esparza14,AminofKRSV18,DBLP:series/synthesis/2015Bloem}.

One of the most prominent techniques for the identification of fragments with decidable PMCP are well-structured transition systems (WSTS)~\cite{DBLP:journals/iandc/Finkel90,abdulla1996general,AbdullaCJT00,FinkelS01}.
The WSTS framework puts a number of restrictions on the system, most importantly the compatibility of its transition relation with a well-quasi order (wqo) on its (infinite) set of states, which in turn allows to decide some PMCP problems, including coverability.
However, while many of the works on parameterized verification share certain techniques, systems with different communication primitives have usually been studied separately, and it is hard to keep an overview of which problems are decidable for which class of systems, and why.

In this paper, we show that a range of systems, previously studied using different techniques, can be unified in a single framework.
Our framework gives a surprisingly simple explanation of existing decidability results for these systems, extends both the class of systems and the types of properties that can be verified, and allows us to prove previously unknown complexity bounds for some of these problems.
While the main condition of our framework resembles that of WSTS, i.e., compatibility of transitions with a wqo, we do not make use of any WSTS techniques.
Instead, we show that  $(0,1)$-counter abstraction (or simply $01$-abstraction), i.e., a binary abstraction that does not count the number of processes in a given state, but only distinguishes whether it is occupied or not, is precise for all systems satisfying the condition.
This abstraction is not only fixed for the whole class of systems, but may also be much more concise than the abstraction obtained by using WSTS techniques.
The wqo $\wqo$ we consider is an extension of the ``standard'' wqo for component-based systems, in which two configurations of the system are only comparable if they agree on which local states currently are occupied (by at least one process), and which are not occupied by any process. 

\smartpar{Parameterized Systems and Related Work}
The systems we consider are based on one control process and an arbitrary number of identical user processes.
Processes change state synchronously according to a step relation,
usually based on local transitions that may be synchronized based on transition labels.
In particular, our framework supports the following communication (or synchronization) primitives from the parameterized verification literature:
\begin{itemize}[topsep=0pt,parsep=1pt]
    \item  \emph{Lossy broadcast}~\cite{DelzannoSZ12}, where processes can send broadcast messages that may or may not be received by the other processes. 
		This model is equivalent to the widely studied system model of \emph{reconfigurable broadcast networks} (RBN)~\cite{DelzannoSTZ12,BertrandFS14,Balasubramanian18,Balasubramanian22}, where processes communicate via broadcast to their neighbors in the underlying communication topology, which can reconfigure at any time. 
    Here, we frame them as lossy broadcast in a clique topology, since we also assume all other systems to be arranged in a clique topology.
    
		\item \emph{Disjunctive guards}~\cite{EmersonK00}, where transitions of a process depend on the existence of another process that is in a certain local state. 
		Systems with disjunctive guards (or: \emph{disjunctive systems}) have been studied extensively in the literature~\cite{EmersonK00,EmersonK03,AusserlechnerJK16,JacobsS18,AminofKRSV18,JacobsSV22}. 
    We note that this model is equivalent to immediate observation (IO) protocols~\cite{EsparzaRW19}, a subclass of population protocols~\cite{AngluinAER07}, where a process observes the state of another process and changes its own state accordingly. 
    IO protocols are also known to be equivalent to the restriction of RBN in which broadcast transitions must be self-loops \cite{Gandalf21}. 
		
    \item \emph{Synchronization}, where transitions are labeled with actions and in every step of the system all processes synchronize on the same action.
		This model is studied for example in the context of controller synthesis~\cite{BertrandDGGG19,ColcombetFO21}. 
    There, the goal is 
    to decide if, for a given protocol followed by a parametric number of processes, a controller strategy exists that eventually puts all processes in the final state $f$.
    In \cite{ColcombetFO21}, the problem is posed in a stochastic setting.
    Synchronization protocols may be seen as a restriction of (non-lossy) broadcast protocols~\cite{EmersonN98,EsparzaFM99}.
    \item \emph{Asynchronous shared memory} (ASM)~\cite{EsparzaGM16} allows processes to communicate through finite-domain shared variables, but without locks and non-trivial read-modify-write operations, i.e., a transition cannot read and write a variable simultaneously.
		ASM systems (also called register protocols~\cite{BouyerMRSS16}) are known to be equivalent to RBN with regard to reachability properties~\cite{Gandalf21}.
	In \cite{EsparzaGM16} the authors go beyond what we consider in this work, as they consider that processes can also be pushdown machines or even Turing machines, and show that decidability can be preserved under certain restrictions.
\end{itemize}

\noindent Considering related verification techniques, 
close to ours in spirit is $(0,1,\infty)$-counter abstraction~\cite{PnueliXZ02}, with the crucial difference that their technique is approximative, while ours is precise for the systems we consider.
Additionally, $01$-counter abstraction has already been used for parameterized verification and repair in previous work~\cite{JacobsSV22,BaumeisterEJSV24}, but for more restricted classes of systems and, again, with correctness arguments specific to these classes.
In contrast, we provide a general criterion for correctness of the abstraction for a much broader class of systems and properties.

{In addition, there has been a lot of work on the parameterized verification of systems with more powerful communication primitives, such as pairwise rendezvous~\cite{GermanS92,AminofKRSV18} or (non-lossy) broadcast~\cite{EsparzaFM99}.
While these also fall into the class of WSTS, they are not compatible with the wqo $\wqo$, and $01$-abstraction is not precise for them.
Accordingly, the complexity of parameterized verification problems is in general much higher for these systems.
}

\smartpar{Contributions}
We introduce a common framework for the verification of parameterized systems that are well-structured with respect to the  wqo $\wqo$. 
\begin{itemize}
\item We prove that for all such systems, $01$-abstraction is sound and complete for safety properties, and that 
 lossy broadcast protocols, \disjunctive{}s, synchronization protocols, and ASM 
 fall into this class, as well as 
 systems based on a novel \emph{guarded} synchronization primitive, and 
 systems with combinations of these primitives (\cref{sec:basic}).

\item We show that a cardinality reachability (CRP) problem, which subsumes classical parameterized problems like coverability and target, is \PSPACE-complete for our class of systems
(\cref{sec:reach}).

\item We show how the $01$-abstraction can be leveraged to decide finite trace properties of a fixed number of processes in the parameterized system, and slightly improve known results on properties over infinite traces for \disjunctive{}s (\cref{sec:trace}).

\item We show that under modest additional assumptions on the systems, the complexity of the CRP is significantly lower (\cref{sec:small-tcs}).
\end{itemize}

\section{Preliminaries}
\label{sec:prelim}
\smartpar{Multisets}
A \emph{multiset} on a finite set \(E\) is a mapping \(C \colon E \rightarrow \N\), i.e. for any $e\in E$, \(C(e)\) denotes the number of occurrences of element \(e\) in \(C\).
We sometimes consider $C$ as a vector of length the cardinality of $E$, and denote it as $\conf{c} \in \N^E$.
Given $e \in E$, we denote by $\conf{e}$ the multiset consisting of one occurrence of element $e$. 
Operations on \(\N\) like addition or comparison are extended to multisets by defining them component-wise on each element of \(E\).
Subtraction is allowed in the following way: if $\conf{c},\conf{d}$ are multisets on set $E$ then for all $e\in E$, $(\conf{c}-\conf{d})(e)=\max (\conf{c}(e)-\conf{d}(e),0)$.
We call $|\conf{c}| = \sum_{e\in E} \conf{c}(e)$ the \emph{size} of $\conf{c}$.
The support $\supp{\conf{c}}$ of $\conf{c}$ is the set of elements $e\in E$ such that $\conf{c}(e)\ge1$.

\smartpar{Counter System}
Intuitively, a counter system explicitly keeps track of the state of the controller process, and for user processes keeps track of \emph{how many} user processes are in which local state.
Let us formalize this idea.
\begin{definition}
A \emph{counter system (CS)} is a triple $\csystem = \csystemdef$
where $C$ is the finite set of states of the controller,
$Q$ is the finite set of states of the users
 and $\mathcal{T}$ is the step relation 
such that $\mathcal{T} \subseteq (C \times \N^Q) \times (C \times \N^Q)$, where $|\conf{v}|=|\conf{v}'|$ whenever $((c,\conf{v}),(c',\conf{v}')) \in \mathcal{T}$, i.e., steps are size-preserving.
A \emph{configuration} of $\csystem$ is a pair $(c, \conf{v}) \in C \times \N^Q$.
We may fix initial states $c_0 \in C$ and $Q_0 \subseteq Q$; 
an \emph{initial configuration} is then any $(c_0,\conf{v}_0)$ such that $\conf{v}_0(q)=0$ for all $q$ not in $Q_0$.
The \emph{size} of a configuration is $|(c,\conf{v})| = 1+ |\conf{v}|$.
\end{definition}

If $\left( (c,\conf{v}) , (c',\conf{v}') \right) \in \mathcal{T}$ then we say there is a \emph{step} from $(c,\conf{v})$ to $(c',\conf{v}')$,
also denoted $(c,\conf{v}) \trans{} (c',\conf{v}')$.
We denote by $\trans{*}$ the reflexive and transitive closure of the step relation. 
A sequence of steps is called a \emph{path} of $\csystem$. 
A path is a \emph{run} if it starts in an initial configuration.
A configuration $(c,\conf{v})$ is \emph{reachable} if there is a run that ends in $(c,\conf{v})$.

\begin{remark}
In contrast to some of the results in this area, our model supports an additional distinguished controller process, which may execute a different protocol than the user processes.
It is known that in some settings the model with a controller is strictly more expressive than the model without~\cite{AminofKRSV18}.

Moreover, since our model also supports multiple initial states for the user processes, our results extend to the case of any fixed number of distinguished processes, and any fixed number of different types of user processes.%
\footnote{To see this, note first that if a system has multiple controllers, we can encode all of them as a single controller by simply considering their (finite-state) product.
To support $k$ different types of user processes with state sets $Q_1,\ldots,Q_k$ such that $Q_i \cap Q_j = \emptyset$ for all $i \neq j$, we simply construct one big user process with state set $Q_1 \cup \cdots \cup Q_k$, and similarly let the union of all individual initial states be the initial states of the constructed system.}
To keep notation simple, we will use a single controller and a single type of user process throughout the paper.
\end{remark}

\smartpar{Well-quasi Order}
Let $S$ be the (infinite) set of configurations of a CS.
A \emph{well-quasi order (wqo)} on $S$ is a relation ${\preceq} \subseteq S \times S$ that is reflexive and transitive, and is such that every infinite sequence  $s_0, s_1, \ldots$ of elements from $S$ contains an increasing pair $s_i \preceq s_j$ with $i < j$.

A wqo commonly used on configurations of a CS is defined as follows: 
\begin{center}
$(c,\conf{v}) \preceq (d, \conf{w}) \Leftrightarrow \left( c=d \land  \forall q \in Q{:} \conf{v}(q) \leq \conf{w}(q) \right)$
\end{center}
We define our wqo $\wqo$ as the following refinement of $\preceq$:
\begin{center}
$(c,\conf{v}) \wqo (d, \conf{w}) \Leftrightarrow$ $\left( (c,\conf{v}) \preceq (d, \conf{w}) \land \forall q \in Q{:} \left(\conf{v}(q)=0 \Leftrightarrow \conf{w}(q)=0 \right) \right)$
\end{center}

\smartpar{Compatibility}
We say that a CS $\csystem$ is \emph{forward $\preceq$-compatible}\footnote{This is sometimes called \emph{strong compatibility} in the literature.} for a wqo $\preceq$ if whenever there is a step $(c,\conf{v}) \rightarrow (c',\conf{v}')$ and $(c,\conf{v}) \preceq (d,\conf{w})$, then there exists a %
step $(d,\conf{w}) \rightarrow (d',\conf{w}')$ with $(c',\conf{v}') \preceq (d',\conf{w}')$.
We say $\csystem$ is \emph{backward $\preceq$-compatible} if whenever there is a step $(c,\conf{v}) \rightarrow (c',\conf{v}')$ and $(c',\conf{v}') \preceq (d',\conf{w}')$, then there exists a step $(d,\conf{w}) \rightarrow (d',\conf{w}')$ with $(c,\conf{v}) \preceq (d,\conf{w})$.
$\csystem$ is  \emph{fully $\preceq$-compatible} if it is forward and backward $\preceq$-compatible.

\smartpar{01-Counter System}
The idea of the $(0,1)$-counter abstraction is that we only distinguish whether a given local state is occupied or not.
This is formalized through an abstraction function
 $\alpha: C \times \mathbb{N}^Q \rightarrow C \times \{0,1\}^Q$  such that $\alpha(c,\conf{v}) = (c,\absconf{v})$, where $\absconf{v}(q)=1$ if $\conf{v}(q) \geq 1$ and $\absconf{v}(q)=0$ if $\conf{v}(q) =0$.
We define the abstraction of a given CS via $\alpha$.

\begin{definition}
The \emph{$01$-counter system} ($01$-CS)  of $\csystem = (C, Q, \mathcal{T})$
is the tuple $\csystemabstract = (C \times \{0,1\}^Q, \mathcal{T}_\alpha)$, where $\mathcal{T}_\alpha \subseteq (C \times \{0,1\}^Q) \times (C \times \{0,1\}^Q)$ is such that $(c,\absconf{v}) \trans{} (c',\absconf{v'}) \in \mathcal{T}_\alpha$ if there exists a concrete step $(c,\conf{v}) \rightarrow (c',\conf{v}') \in \mathcal{T}$ with $\alpha(c,\conf{v})=(c,\absconf{v})$ and $\alpha(c',\conf{v}')=(c',\absconf{v'})$.
Given initial states $c_0 \in C$ and $Q_0 \subseteq Q$ of $\csystem$, 
an initial configuration of $\csystemabstract$ is any $(c_0,\absconf{v})$ such that $\absconf{v}(q)=0$ for all $q$ not in $Q_0$.
\end{definition}

\begin{remark}
Unlike for CSs,
in a 01-CS it is not the case that steps occur only between configurations of the same size.
For example, we may have $(c,\conf{v}) \trans{} (c,\conf{v}')$ in a CS $\csystem$
 where $\conf{v}(q)=1$ for all states $q\in Q$, and the step sends all the user processes to a state $p$ in $\conf{v'}$.
 Then the corresponding 01-CS $\csystemabstract$ has a step $(c,\absconf{v}) \trans{} (c,\absconf{v'})$
 such that $\conf{v}=\absconf{v}$ and $\absconf{v'}(q)=1$ if $q=p$ and 0 elsewhere, i.e.,
 $|\absconf{v}|=|Q|+1$ while $|\absconf{v'}|=1$. 
\end{remark}

The following link between the wqo $\wqo$ and 01-CSs follows directly from the definition.

\begin{lemma}
\label{lm:wqo-01-counter-system}
Let $\csystem=(C,Q,\mathcal{T})$ a CS, 
and $(c,\absconf{v})$ a configuration in $C \times \{0,1\}^Q$.
Let $(d,\conf{w})$ a configuration in $C \times \N^Q$.
Then $(c,\absconf{v}) \wqo (d,\conf{w})$ if and only if $\alpha(d,\conf{w})=(c,\absconf{v})$.
\end{lemma}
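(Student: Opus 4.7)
The plan is to prove the biconditional by straightforward unfolding of the definitions, exploiting the key observation that the left configuration takes values in $\{0,1\}$ rather than in $\N$. Both sides of the equivalence decompose into a condition on the controller state (forcing $c=d$) and a pointwise condition on each $q \in Q$, so the proof reduces to checking the two pointwise conditions match.

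For the forward direction, I would assume $(c,\absconf{v}) \wqo (d,\conf{w})$. By definition of $\wqo$, this gives $c=d$ together with, for each $q \in Q$, both $\absconf{v}(q) \le \conf{w}(q)$ and the biconditional $\absconf{v}(q)=0 \Leftrightarrow \conf{w}(q)=0$. To conclude $\alpha(d,\conf{w})=(c,\absconf{v})$, I need only argue that $\absconf{v}(q)=1$ exactly when $\conf{w}(q)\ge 1$: if $\conf{w}(q)=0$ then the biconditional forces $\absconf{v}(q)=0$, and if $\conf{w}(q)\ge 1$ then the contrapositive of the biconditional gives $\absconf{v}(q)\ne 0$, which by $\absconf{v}(q)\in\{0,1\}$ means $\absconf{v}(q)=1$. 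This matches the defining clause of $\alpha$.

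For the backward direction, assume $\alpha(d,\conf{w})=(c,\absconf{v})$. By definition of $\alpha$ we have $c=d$, and for every $q$, $\absconf{v}(q)=1$ if $\conf{w}(q)\ge 1$ and $\absconf{v}(q)=0$ if $\conf{w}(q)=0$. From this I read off both of the clauses required by $\wqo$: the inequality $\absconf{v}(q)\le\conf{w}(q)$ holds in both cases (either $0\le 0$ or $1\le \conf{w}(q)$), and the biconditional $\absconf{v}(q)=0\Leftrightarrow\conf{w}(q)=0$ is immediate from the two-case specification of $\alpha$.

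There is essentially no obstacle here; the lemma is a pure unfolding of definitions, and the only subtlety worth calling out in the write-up is the use of $\absconf{v}(q)\in\{0,1\}$ in the forward direction to upgrade ``nonzero'' to ``equal to one.''
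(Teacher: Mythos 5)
Your proof is correct and is exactly the definition-unfolding argument the paper has in mind (the paper states the lemma "follows directly from the definition" and gives no explicit proof). The one subtlety you flag — using $\absconf{v}(q)\in\{0,1\}$ to upgrade "nonzero" to "equal to one" in the forward direction — is indeed the only point worth writing down.
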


\subsection{Types of Steps}
\label{sec:transition-types}
We formally define the communication primitives mentioned in Section \ref{sec:introduction}.
Steps between configurations are defined
as multisets of (local) transitions that are taken by different processes at the same time, i.e.,
every process takes at most one transition in a step.
We say a process in configuration $(c, \conf{v})$ \emph{takes a transition} $p \trans{} q$ 
if the process moves from state $p$ to $q$, resulting in a new configuration $(c', \conf{v}')$
equal to $(c,\conf{v}-\conf{p}+\conf{q})$ if $p,q \in Q$ and $\conf{v}(p)\ge 1$,
and equal to $(q,\conf{v})$ if $p,q \in C$ and $c=p$.
The conditions $\conf{v}(p)\ge 1$ and $c=p$ ensure that there is a process in $(c, \conf{v})$ that can take the transition.

The following definitions of transition types and the steps they induce 
come from the literature on systems in which all steps are of one type, 
e.g., RBN~\cite{DelzannoSTZ12,DelzannoSZ12} in which all steps are lossy broadcasts.
Note that we allow the same transition to be taken by more than one process in a single step, even if the classical definitions would consider this several consecutive steps.
We discuss the resulting differences after formally defining our steps.

\smartpar{Internal}
\emph{Internal transitions} are of the form 
$(p,q)$ with $p,q \in C$ or $p,q \in Q$,
also denoted $p \trans{} q$.
These induce \emph{internal steps} 
where one or more processes take the same transition $p \trans{} q$,
i.e., $(c,\conf{v}) \trans{} (c,\conf{v}-i \cdot \conf{p}+i \cdot \conf{q})$ if $p,q \in Q$ and $\conf{v}(p)\ge i$,
and $(p,\conf{v}) \trans{} (q,\conf{v})$ if $p,q \in C$.

\smartpar{Lossy broadcast~\cite{DelzannoSZ12}}
Let $\Sigma$ be a finite alphabet.
\emph{Lossy broadcast transitions} are of the form 
$(p,l,q)$ with $l \in \set{!a,?a \ | \ a \in \Sigma}$ 
and $p,q \in C$ or $p,q \in Q$.
We sometimes denote a transition $(p,l,q)$ by $p\trans{l} q$.
Transitions with $l$ of the form $!a$ are \emph{broadcast transitions},
and transitions with $l$ of the form $?a$ are \emph{receive transitions}.
A \emph{lossy broadcast step} 
from a configuration  $(c,\conf{v})$
is made up of one or more processes taking the same broadcast transition $p\trans{!a} p'$,
and an arbitrary number $k\ge 0$ of processes taking receive transitions $p_1 \trans{?a} p'_1, \ldots, p_k \trans{?a} p'_k$.
If $(c',\conf{v}')$ is the resulting configuration, we denote the step by $(c,\conf{v}) \trans{} (c',\conf{v}')$.

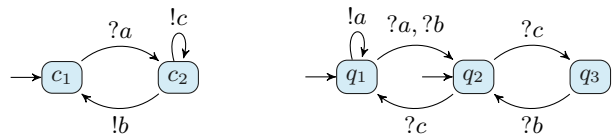
\begin{figure}[h]
  \centering
	\begin{tikzpicture}[font=\footnotesize]
      \node[location,initial] (a1) {$c_1$};
	  \node[location] (a2) [right =of a1] {$c_2$};

      \path[->]
      (a1) edge[bend left = 40] node[above] {$?a$} (a2)
      (a2) edge[bend left = 40] node[below] {$!b$} (a1)
      (a2) edge[loop above] node[above] {$!c$} (a2)
			;
    \end{tikzpicture}
	\hspace{3em}
    \begin{tikzpicture}[font=\footnotesize]
      \node[location,initial] (a1) {$q_1$};
	  \node[location,initial] (a2) [right =of a1] {$q_2$};
	  \node[location] (a3) [right =of a2] {$q_3$};

      \path[->]
      (a1) edge[bend left = 40] node[above] {$?a,? b$} (a2)
      (a2) edge[bend left = 40] node[below] {$?c$} (a1)
      (a3) edge[bend left = 40] node[below] {$?b$} (a2)
      (a2) edge[bend left = 40] node[above] {$?c$} (a3)
      (a1) edge[loop above] node[above] {$!a$} (a1)
      ;
    \end{tikzpicture}

\caption{A lossy broadcast protocol with two controller states and three user states.
}
\label{fig:rbn}
\vskip-0.6cm
\end{figure}

\begin{example}
\label{ex:rbn}
\cref{fig:rbn} depicts a  lossy broadcast protocol. 
The initial configurations are the $(c_1,\conf{v}_0)$ such that $\supp{\conf{v}_0}\subseteq\set{q_1,q_2}$.
From configuration $(c_1,\conf{v})$ with $\conf{v}=(2, 1,0)$, 
there is a step to $(c_2,\conf{v}')$ with $\conf{v}'=(1,2,0)$:
a user process takes broadcast  $q_1 \trans{!a} q_1$,  
the controller takes receive $c_1 \trans{?a} c_2$ and the other user takes  receive $q_1 \trans{?a} q_2$.
Depending on which processes receive the $a$ broadcast,
 there is also a step on $a$ from $(c_1,\conf{v})$ to $(c_1,\conf{v})$, $(c_2,\conf{v})$ and $(c_1,\conf{v}')$.
\end{example}

\smartpar{Disjunctive guard~\cite{EmersonK00}}
\emph{Disjunctive guard transitions} are of the form $(p,\existsGuard,q)$
where $p,q \in C$ or $p,q \in Q$, and $\existsGuard \subseteq C \cup Q$.
We denote the transition by $p\trans{\existsGuard} q$.
A configuration $(c,\conf{v})$ \emph{satisfies} $\existsGuard$ if $(\conf{v})(r) \ge 1$ for some $r \in \existsGuard$ or if $c \in \existsGuard$.
A \emph{disjunctive guard step} on transition $p\trans{\existsGuard} q$ is only enabled from configurations $(c,\conf{v})$ that satisfy $\existsGuard$.
Then, it consists of one or more processes taking the transition $p\trans{\existsGuard} q$ (like in internal steps), such that the resulting configuration $(c',\conf{v'})$ also satisfies $\existsGuard$, i.e., a moving process cannot be the one that satisfies the guard.
We denote the step by $(c,\conf{v}) \trans{} (c',\conf{v}')$.

\smartpar{Synchronization~\cite{BertrandDGGG19}}
Let $\Sigma$ be a finite set of labels.
\emph{Synchronization transitions} are of the form
$(p,a,q)$ with $a \in \Sigma$ 
and $p,q \in C$ or $p,q \in Q$, also denoted $p\trans{a} q$.
In a \emph{synchronization step on $a$} from a configuration  $(c,\conf{v})$,
 all processes take a transition with label $a$, if such a transition is available in their current state
(otherwise they stay in their current state).
If $(c',\conf{v}')$ is the resulting configuration, then we 
denote the step by $(c,\conf{v}) \trans{} (c',\conf{v}')$.
\begin{figure}[!ht]
    \centering
    \begin{tikzpicture}[font=\footnotesize]
        \node[location,initial] (a1) {$c_1$};
        \node[location] (a2) [right =of a1] {$c_2$};

        \path[->]
        (a1) edge[bend left = 40] node[above] {$a$} (a2)
        (a2) edge[bend left = 40] node[below] {$b$} (a1)
        (a2) edge[loop above] node[above] {$c$} (a2)
        ;
    \end{tikzpicture}
    \hspace{3em}
    \begin{tikzpicture}[font=\footnotesize]
        \node[location,initial] (a1) {$q_1$};
        \node[location,initial] (a2) [right =of a1] {$q_2$};
        \node[location] (a3) [right =of a2] {$q_3$};

        \path[->]
        (a1) edge[bend left = 40] node[above] {$a, b$} (a2)
        (a2) edge[bend left = 40] node[below] {$c$} (a1)
        (a3) edge[bend left = 40] node[below] {$b$} (a2)
        (a2) edge[bend left = 40] node[above] {$c$} (a3)
        (a1) edge[loop above] node[above] {$a$} (a1)
        ;
    \end{tikzpicture}

    \caption{A synchronization protocol with two controller states and three user states.}
    \label{fig:sync}
    \vskip-0.6cm
\end{figure}

\begin{example}
    \label{ex:sync}
    Consider the synchronization protocol depicted in \cref{fig:sync}.
    From the configuration $(c_1,\conf{v})$ with $\conf{v}=(2,1,0)$, there is a step on letter $a$ to
    $(c_2,(2,1,0))$, $(c_2,(1,2,0))$, or to $(c_2,(0,3,0))$.
    The user process initially in $q_2$  does not move because there is no $a$-transition from $q_2$.
\end{example}

\smartpar{Asynchronous shared memory (ASM)~\cite{EsparzaGM16}}
\emph{ASM transitions} are of the form
$(p,l,q)$ with $l \in \set{w(a),r(a) \ | \ a \in C} $ and $p,q \in Q$.
In systems with ASM transitions, 
we assume that a transition $(a,b)$ is available for every $a,b \in C$.
We also denote a transition $(p,l,q)$ by $p\trans{l} q$, and $(a,b)$ by $a\trans{} b$.
 Transitions with $l$ of the form $w(a)$ are \emph{write transitions},
 and transitions with $l$ of the form $r(a)$ are \emph{read transitions}.
Intuitively, the controller keeps track of the value of the shared variable, 
and the user processes can read that value or give an instruction to update the value\footnote{The restriction to a single variable is for simplicity, our results extend to multiple finite-domain variables.}.
An \emph{ASM step} from a configuration  $(c,\conf{v})$ is either a write step or a read step:
A \emph{write step} is made up of one or more user processes taking a transition $p\trans{w(a)} q$
and the controller taking transition $c \trans{} a$.
A \emph{read step} is made up of one or more user processes taking a transition $p\trans{r(a)} q$
and the controller taking transition $a \trans{} a$, ensuring that $a=c$.

\medskip

Examples for systems with disjunctive guards and ASM can be found in \cref{sec:additional-example-steps}. 
Note that all of these types of steps are between configurations of the same size.
Any finite sets $C,Q$ and any combination of transitions of the types described above 
define a set of steps $\mathcal{T}$ such that $\csystem = (C, Q, \mathcal{T})$ is a CS.
As mentioned above, one of our steps sometimes corresponds to several consecutive steps
in the classical definitions of the literature. 
More precisely,
the same broadcast transition, disjunctive transition, read transition or write transition can be taken by an arbitrary number of distinct processes in the same step.
Note that these ``accelerated'' steps do not change the reachability properties of our systems:
if $(c,\conf{v}) \trans{} (c',\conf{v}')$  is a step in our definition, 
then $(c,\conf{v}) \trans{*} (c',\conf{v}')$  is a sequence of steps in the classic definition.

\section{Reduction of Parameterized Safety Verification to the 01-CS}
\label{sec:basic}

We first prove that for $\wqo$-compatible systems, parameterized safety verification, i.e., regarding properties of finite runs, can be reduced to their 01-abstraction.
Then we show that all system types introduced in \cref{sec:transition-types} are $\wqo$-compatible, as well as some new system types.

\begin{restatable}{lemma}{LmReach}
\label{lem:reachability}
    If a given CS $\csystem$ is fully $\wqo$-compatible, then 
    there exists a run 
    $(c_0,\conf{v}_0) \rightarrow (c_1,\conf{v}_1) \rightarrow \ldots \rightarrow (c_n,\conf{v}_n)$ in $\csystem$ 
    if and only if 
    there exists a run
		$(c_0,\conf{v}_0^\alpha) \trans{} (c_1,\conf{v}_1^\alpha) \trans{} \ldots \trans{} (c_n,\conf{v}_n^\alpha)$
    in the corresponding $01$-CS $\csystemabstract$
    such that $\alpha(c_i, \conf{v}_i)=(c_i,\conf{v}_i^\alpha)$ for each $i$.
\end{restatable}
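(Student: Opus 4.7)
The forward direction is immediate from the definition of $\mathcal{T}_\alpha$: applying $\alpha$ pointwise to a concrete run yields an abstract run whose configurations match the concrete ones under $\alpha$, and a concrete initial configuration (support $\subseteq Q_0$) abstracts to an initial one of $\csystemabstract$.

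For the backward direction, a direct induction on $n$ breaks down. To extend a length-$n$ concrete run by one step using a concrete witness $(c_n,\conf{u}) \to (c_{n+1},\conf{u}')$ of the last abstract step via forward $\wqo$-compatibility, one would need $\conf{u} \wqo \conf{v}_n$, i.e., $\conf{u}(q) \le \conf{v}_n(q)$ for every $q$, which $\alpha(\conf{v}_n) = \alpha(\conf{u})$ does not ensure. The plan is therefore to prove by induction on $n$ a stronger statement: for every abstract run of length $n$ and every $K \ge 1$, there is a matching concrete run with $\conf{v}_i(q) \ge K$ for every $q \in \supp{\absconf{v}_i}$ and every $i$; the lemma is then the case $K = 1$. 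The base case $n = 0$ is handled by $\conf{v}_0 := K \cdot \absconf{v}_0$, which is a valid initial configuration since $\supp{\absconf{v}_0} \subseteq Q_0$.

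The inductive step is powered by a witness-fattening trick using backward $\wqo$-compatibility. Given any witness $(c_n,\conf{u}) \to (c_{n+1},\conf{u}')$ of the last abstract step, I would set $\conf{u}'_K := \conf{u}' + K \cdot \absconf{v}_{n+1}$; then $\conf{u}' \wqo \conf{u}'_K$ (same support, pointwise $\le$), so backward compatibility produces an enlarged witness $(c_n,\conf{u}_K) \to (c_{n+1},\conf{u}'_K)$ with $\conf{u} \wqo \conf{u}_K$ and $\conf{u}'_K(q) \ge K$ on $\supp{\absconf{v}_{n+1}}$. I then invoke the induction hypothesis on the prefix with threshold $K' := \max(K, \max_q \conf{u}_K(q))$, obtaining a concrete run $\conf{v}_0 \to \cdots \to \conf{v}_n$ in which $\conf{v}_n$ coordinatewise dominates $\conf{u}_K$ on the shared support $\supp{\absconf{v}_n}$. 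Forward $\wqo$-compatibility applied to the fattened witness then yields a step $\conf{v}_n \to \conf{v}_{n+1}$ with $\conf{u}'_K \wqo \conf{v}_{n+1}$, which gives both $\alpha(\conf{v}_{n+1}) = \absconf{v}_{n+1}$ (same support) and $\conf{v}_{n+1}(q) \ge \conf{u}'_K(q) \ge K$ on the required support.

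The main obstacle is the absence of control over how the mass of the constructed $\conf{v}_{n+1}$ is distributed across its support after a forward step. The resolution genuinely uses both directions of $\wqo$-compatibility: backward compatibility inflates the endpoint of the witness (guaranteeing $\conf{u}'_K$ is uniformly large), and the strengthened invariant inflates the prefix (so that $\conf{v}_n$ dominates $\conf{u}_K$), which together let forward compatibility carry these inflations through the extending step.
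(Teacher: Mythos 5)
Your proof is correct. The paper also proceeds by induction on the length of the abstract run and uses both directions of compatibility, but it resolves the mismatch between the inductively built prefix and the concrete witness of the new abstract step differently: it picks a common $\wqo$-upper bound $(c_n,\conf{x}_n)$ of the prefix endpoint $(c_n,\conf{v}_n)$ and the witness source $(c_n,\conf{w}_n)$ (possible since both have the same support as $\conf{v}_n^\alpha$), lifts the \emph{entire prefix run} so that it ends in $(c_n,\conf{x}_n)$ by applying backward compatibility repeatedly along the prefix, and then extends from $(c_n,\conf{x}_n)$ by forward compatibility. You instead strengthen the induction hypothesis with a uniform occupancy threshold $K$, so that the prefix endpoint can be made to dominate the (once-fattened) witness source by choosing $K'$ large enough; backward compatibility is applied only once per step, to the witness rather than along the prefix. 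Both mechanisms are sound. Yours keeps the constructed prefix fixed and makes the bookkeeping explicit (in particular, initiality of the starting configuration is immediate in the base case $\conf{v}_0 = K\cdot\conf{v}_0^{\alpha}$), at the cost of carrying the extra parameter $K$ through the induction; the paper's version keeps the lemma statement unchanged but implicitly relies on the fact that lifting a run backward along $\wqo$ preserves the supports of all intermediate configurations and hence the initiality of the first one and the $\alpha$-images of all the others.
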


\begin{proof}
Let $\csystem = (C, Q, \mathcal{T})$ be a $\wqo$-compatible CS, 
and let $\csystemabstract$ be its $01$-CS.
Assume there exists a run $(c_0,\conf{v}_0) \trans{} \ldots \trans{} (c_n,\conf{v}_n) \trans{} (c,\conf{v})$ in $\csystem$.
Then by definition of $\csystemabstract$, for each step $(c_i,\conf{v}_i) \trans{} (c_{i+1},\conf{v}_{i+1})$ of the sequence
there exists an abstract step $(c_i,\conf{v}_i^\alpha) \trans{} (c_{i+1},\conf{v}_{i+1}^\alpha)$ in $\csystemabstract$,
where $\alpha(c_i,\conf{v}_i) = (c_i,\conf{v}_i^\alpha)$.

In the other direction, assume that there exists a run $(c_0,\conf{v}_0^\alpha) \trans{} \ldots \trans{} (c_n,\conf{v}_n^\alpha) \trans{} (c,\conf{v}^\alpha)$ in $\csystemabstract$.
We prove by induction on $n$ that there exists a run in $\csystem$ with the desired properties.\\
\emph{Base case}: $n=0$. If $(c_0,\conf{v}_0^\alpha) \trans{} (c,\conf{v}^\alpha)$ is a step in $\csystemabstract$, 
then by definition there exist
 $\conf{v}_0, \conf{v}$ with 
 $\alpha(c_0, \conf{v}_0)=(c_0,\conf{v}_0^\alpha)$ and $\alpha(c, \conf{v})=(c,\conf{v}^\alpha)$
and $(c_0,\conf{v}_0) \trans{} (c,\conf{v})$. \\
\emph{Induction step}: $n \rightarrow n+1$. Let $(c_0,\conf{v}_0^\alpha) \trans{*} (c_n,\conf{v}_n^\alpha)$ 
be a run of $\csystemabstract$ and 
$(c_n,\conf{v}_n^\alpha) \trans{} (c,\conf{v}^\alpha)$ a step in $\csystemabstract$. 
    By induction hypothesis, there exists a run $(c_0,\conf{v}_0) \trans{*} (c_n,\conf{v}_n)$ of $n$ steps in $\csystem$ 
    with $\alpha(c_0, \conf{v}_0)=(c_0,\conf{v}_0^\alpha)$ and $\alpha(c_n, \conf{v}_n)=(c_n,\conf{v}_n^\alpha)$.
    By definition of the $01$-CS, there exist $\conf{w}_n,\conf{w}$ 
    with $\alpha(c_n, \conf{w}_n)=(c_n,\conf{v}_n^\alpha)$, $\alpha(c, \conf{w})=(c,\conf{v}^\alpha)$
    and step $(c_n,\conf{w}_n) \trans{} (c,\conf{w})$ in $\csystem$.
    Configurations $\conf{w}_n$, $\conf{v}_n$ (and $\conf{v}_n^\alpha$) are equal to 0 on the same states,
    so there exists $\conf{x}_n$ such that 
     $(c,\conf{w}_n) \wqo (c,\conf{x}_n)$ and  $(c,\conf{v}_n) \wqo (c,\conf{x}_n)$.
     And therefore by \cref{lm:wqo-01-counter-system} also $\alpha(c_n, \conf{x}_n)=(c_n,\conf{v}_n^\alpha)$.
		 Then, by backward $\wqo$-compatibility of $\csystem$ we get that there is a run that reaches $(c_n,\conf{x}_n)$ in $n$ steps, and by
     forward $\wqo$-compatibility we get that there exists 
     a step $(c_n,\conf{x}_n) \trans{} (c,\conf{x})$ for some $(c,\conf{x})$ such that $\alpha(c,\conf{x}) = (c,\conf{v}^\alpha)$.
    \qed
\end{proof}

\begin{example}
Consider the lossy broadcast step in the system of \cref{fig:rbn} from $(c_2,(1,0,2))$ to $(c_1,(0,2,1))$
where the controller broadcasts $b$, one of the two processes in $q_3$ takes $q_3 \trans{?b} q_2$ and the process in $q_1$ takes  $q_1 \trans{?b} q_2$.
Configuration $(c_2,(2,0,3))$ is such that  $(c_2,(1,0,2)) \wqo (c_2,(2,0,3))$.
We describe a step from this configuration to a configuration $(c,\conf{v})$ such that $(c_1,(0,2,1)) \wqo (c,\conf{v})$:
the controller broadcasts $b$, two processes take $q_3 \trans{?b} q_2$ and two processes take  $q_1 \trans{?b} q_2$.
\end{example}

\noindent We can prove full $\wqo$-compatibility for the types of steps introduced in \cref{sec:transition-types}. 

\begin{restatable}{lemma}{LmCompatibleAll}
\label{lm:compatible-all}
    CSs induced by one of the following types of steps are fully $\wqo$-compatible: lossy broadcast, disjunctive guard, synchronization, or ASM.\footnote{Internal steps can be seen as a special case of lossy broadcast, disjunctive guard, or ASM steps.}
\end{restatable}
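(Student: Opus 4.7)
The plan is to establish both forward and backward $\wqo$-compatibility with a single uniform argument that specializes to each of the four step types. The key property I repeatedly exploit is that $(c,\conf{v}) \wqo (d,\conf{w})$ forces three things at once: $c=d$, $\conf{w} \geq \conf{v}$ pointwise, and the strong condition $\supp{\conf{v}} = \supp{\conf{w}}$. Thanks to this support coincidence, the enabledness of any local transition, disjunctive-guard witness, or broadcast/read rule transfers directly between the two configurations, so the whole proof reduces to describing how the ``extra'' processes accounted for by $\conf{w} - \conf{v}$ can be shuffled without altering the support.

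For forward compatibility, from a step $(c,\conf{v}) \to (c',\conf{v}')$ and $(c,\conf{v}) \wqo (d,\conf{w})$, I build the desired step from $(d,\conf{w})$ by making the controller perform the same action (possible since $d=c$) and, for each source state $q$ in the shared support, distributing the $\conf{w}(q)$ processes over exactly the set of local actions chosen by the $\conf{v}$-processes in $q$, using each such action at least once; thus extras merely replicate choices that were already made. This is realizable in each of the four cases: for lossy broadcast, receives are optional and multiple senders of a broadcast are allowed; for disjunctive guards, $\supp{\conf{v}} = \supp{\conf{w}}$ gives the pre-step witness and the construction yields $\supp{\conf{w}'} = \supp{\conf{v}'}$, carrying the post-step witness; for synchronization, every $\conf{w}$-process in a state with an $a$-transition has one available and takes one used by some $\conf{v}$-counterpart, while $\conf{w}$-processes in states without an $a$-transition stay, exactly mirroring $\conf{v}$; for ASM the controller writes or reads the same value and extras in the source state take the same user transition. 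The resulting $\conf{w}'$ has the same support as $\conf{v}'$ and dominates it, giving $(c',\conf{v}') \wqo (c',\conf{w}')$.

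For backward compatibility, given $(c,\conf{v}) \to (c',\conf{v}')$ and $(c',\conf{v}') \wqo (d',\conf{w}')$, I construct $(d,\conf{w})$ by pulling the extras back: for each $r$ with $\conf{w}'(r) > \conf{v}'(r)$, each extra unit at $r$ is matched to some source state $p$ that the $\conf{v}$-step actually used to contribute to $r$ (either by moving $p \to r$ or, when $p=r$, by leaving a process in place), and one extra is added at $p$ in $\conf{w}$. The corresponding extra action is appended to the original step to produce the required step from $(d,\conf{w})$ into $\conf{w}'$. Since every such $p$ lies in $\supp{\conf{v}}$, we obtain $\supp{\conf{w}} = \supp{\conf{v}}$ together with $\conf{w} \geq \conf{v}$, hence $(c,\conf{v}) \wqo (d,\conf{w})$. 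The most delicate case here will be synchronization, where every process in a state with an $a$-transition is forced to move, so the numbers of extras at the source and at the target of each $a$-transition are coupled; this is resolved by choosing the source of each target-extra consistently with the concrete $a$-transition that produced it in the original step, which is a counting exercise rather than a conceptual obstacle.
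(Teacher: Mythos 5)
Your proposal is correct and follows essentially the same strategy as the paper: replay the given step from the larger configuration and let the extra processes (those counted by $\conf{w}-\conf{v}$) replicate transitions already used from their state, which simultaneously preserves pointwise domination and the support-equality required by $\wqo$; the paper carries this out type by type (in detail for lossy broadcast, with the "add extra iterations of the emptying transition" fix-up) and dismisses the backward direction as symmetric, whereas you phrase it uniformly and also sketch the backward construction, correctly flagging synchronization as the only case needing care. One small point of precision: in the forward direction, "distributing the $\conf{w}(q)$ processes over the used actions, each at least once" should be read (as your next clause indicates) as the original $\conf{v}(q)$ processes copying the original multiplicities exactly and only the extras choosing freely among used actions — otherwise pointwise domination of $\conf{v}'$ by $\conf{w}'$ is not guaranteed.
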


We give the proof for CSs induced by lossy broadcasts steps,
and relegate the other (similar) proofs to the \cref{sec:appendix-sec-compat}.

\begin{proof}[Partial]
Let $\csystem = (C, Q, \mathcal{T})$ be a CS with only lossy broadcast steps.
To prove forward $\wqo$-compatibility, assume there is a step $(c,\conf{v}) \rightarrow (c',\conf{v}')$ and $(c,\conf{v}) \wqo (d,\conf{w})$.
This step is made up of $j\ge 1$ processes taking a broadcast transition $p_0 \trans{!a} p'_0$
and $k$ processes taking receive transitions $p_1 \trans{?a} p'_1, \ldots, p_k \trans{?a} p'_k$
for some $k \ge 0$.
Since $(c,\conf{v}) \wqo (d,\conf{w})$, for every state $q\in Q$, $\conf{v}(q) \le \conf{w}(q)$ and $c=d$.
Therefore a step with the same $j+k$ transitions can be taken from $(d,\conf{w})$.
Call $(d'',\conf{w''})$ the resulting configuration. 
We want to check $(c',\conf{v'}) \wqo (d'',\conf{w''})$ or modify the step from $(d,\conf{w})$ to make it true.
This means we have to satisfy the following three conditions:
\begin{enumerate}[topsep=1pt,parsep=1pt]
\item 
$d''=c'$: either $c=c'$, in which case no transition is taken by the controller in either step and $d=d''=c=c'$,
or $c \neq c'$, in which case $c=p_i, c'=p_i'$ for some $i \in \set{0,...,k}$ 
and this transition is taken in both steps, so $d''=p_i'=c'$.
\item 
$\conf{w''}(q) \ge \conf{v'}(q)$ for all  $q\in Q$: the same transitions are taken from $\conf{w}\ge \conf{v}$.
\item
$\conf{w''}(q)=0 $ if and only if $\conf{v'}(q)=0$ for all  $q\in Q$: 
if there are no such states then we are done; otherwise
suppose $\conf{v'}(q)=0$ and $\conf{w''}(q)> 0$.
This entails $\conf{w}(q)> 0$ and thus also $\conf{v}(q)> 0$ by definition of $\wqo$.
Informally, this means state $q$ was emptied in the step $(c,\conf{v}) \rightarrow (c',\conf{v}')$;
one of the transitions taken is of the form $q=p_i \trans{a \star} p'_i$ with $\star \in \set{!,?}$.
We modify the step from $(d,\conf{w})$
by adding $\conf{w}(q)-\conf{v}(q)$ iterations of $p_i \trans{\star a} p'_i$, i.e.,  enough to empty $q$.
The resulting configuration $(d',\conf{w'})$ is such that  $(c',\conf{v'}) \wqo (d',\conf{w'})$. 
\end{enumerate}
Backward $\wqo$-compatibility can be proven in a similar way.
\qed
\end{proof}

As a new communication primitive, we can extend synchronization transitions (as introduced in \cref{sec:transition-types}) to \emph{guarded synchronizations}, which are additionally labeled with a pair $(\existsGuard,\forallGuard)$ with $\existsGuard, \forallGuard \subseteq (C \cup Q)$, and then denoted $p \trans{a, (\existsGuard,\forallGuard)} q$.
The step is defined as for synchronization steps, except that a \emph{synchronization step guarded by $(\existsGuard, \forallGuard)$} is only enabled from a configuration $\config$ with $S = \supp{\conf{v}} \cup \{c\}$ if $S \cap \existsGuard \neq \emptyset$ (there exists a $g \in S$ with $g \in \existsGuard$), and $S \subseteq \forallGuard$ (for all $g \in S$ we have $g \in \forallGuard$). 
That is, they are interpreted as a disjunctive and a conjunctive guard, respectively, and we can mix both types of guards, even in the same transition.

\begin{example}[Guarded Synchronization]
Consider the synchronization protocol from \cref{ex:sync}, assuming that the action on $a$ is guarded by $\existsGuard = \{c_1, \loc_1\}$ and $\allGuard = Q \setminus \{ \loc_3 \}$.
Then it is enabled from $(c_1,\conf{v})$ with $\conf{v}=(2,1,0)$ (as there is a process in $c_1 \in \existsGuard$, and no processes are in $\loc_3$).
It is however not enabled from $(c_2,\conf{w})$ with $\conf{w}=(0,2,0)$ (as there is no process in a state from $\existsGuard)$, and also not from $(c_1,\conf{w}')$ with $\conf{w}'=(0,2,1)$ (as one process is in $\loc_3 \notin \allGuard$).
\end{example}

CSs with guarded synchronizations are also fully $\wqo$-compatible.
\begin{restatable}{lemma}{LmCompatibleGSync}
  \label{lm:compatible-guard-synchronizations}
      CSs induced by guarded synchronization steps are fully $\wqo$-compatible. 
\end{restatable}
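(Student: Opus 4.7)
The plan is to reduce \cref{lm:compatible-guard-synchronizations} to the (plain) synchronization case of \cref{lm:compatible-all}, exploiting the fact that $\wqo$ preserves both the controller state and the support of the user-process counts.

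I would start with the following structural observation: whenever $(c,\conf{v}) \wqo (d,\conf{w})$, the definition of $\wqo$ forces $c=d$ and $\supp{\conf{v}} = \supp{\conf{w}}$. Hence the set $S = \supp{\conf{v}} \cup \{c\} = \supp{\conf{w}} \cup \{d\}$ that appears in the guarded-synchronization enabling condition is identical at the two configurations. Since the conditions $S \cap \existsGuard \neq \emptyset$ and $S \subseteq \allGuard$ depend only on $S$, the guard $(\existsGuard,\allGuard)$ is satisfied at $(c,\conf{v})$ iff it is satisfied at $(d,\conf{w})$.

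For forward compatibility, suppose $(c,\conf{v}) \trans{} (c',\conf{v}')$ is a guarded synchronization step on action $a$ with guard $(\existsGuard,\allGuard)$, and $(c,\conf{v}) \wqo (d,\conf{w})$. Forgetting the guard, this is a plain synchronization step on $a$, so the synchronization case of \cref{lm:compatible-all} supplies a plain synchronization step $(d,\conf{w}) \trans{} (d',\conf{w}')$ on $a$ with $(c',\conf{v}') \wqo (d',\conf{w}')$. By the observation above, the guard $(\existsGuard,\allGuard)$ holds at $(d,\conf{w})$, so this plain step is in fact a valid guarded synchronization step of our system. Backward compatibility is symmetric: from a guarded step $(c,\conf{v}) \trans{} (c',\conf{v}')$ and $(c',\conf{v}') \wqo (d',\conf{w}')$, backward $\wqo$-compatibility of plain synchronization produces a pre-image $(d,\conf{w})$ with $(c,\conf{v}) \wqo (d,\conf{w})$ and a plain step $(d,\conf{w}) \trans{} (d',\conf{w}')$, and the guard at $(d,\conf{w})$ is inherited from $(c,\conf{v})$ since the two share support.

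The main subtlety is ensuring that the witness step produced by \cref{lm:compatible-all} carries the same action label $a$ as the original — otherwise the guard of the original transition would not apply to it. This is implicit in the proof scheme used for lossy broadcast in the excerpt, which replays the original step using the same set of taken transitions and merely redistributes the additional processes of the larger configuration among already-used transitions in order to preserve supports. If the synchronization case of \cref{lm:compatible-all} is not stated in this slightly stronger form, I would redo that argument here explicitly for synchronization: same action, same set of used transitions, distribute every extra process of $\conf{w}$ into a transition already used in the $\conf{v}$-step. Once this refined claim is in hand, the guarded-synchronization lemma is immediate via the guard-inheritance observation.
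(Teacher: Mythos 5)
Your proposal is correct and follows essentially the same route as the paper: both reduce to the plain synchronization case of \cref{lm:compatible-all} and observe that, since $\wqo$ forces equal controller states and equal supports, the guard $(\existsGuard,\allGuard)$ is satisfied at $(c,\conf{v})$ iff at $(d,\conf{w})$, so the witness step supplied by plain synchronization compatibility is also a valid guarded step. Your extra remark about the witness step carrying the same action label is a reasonable point of care, and it is indeed satisfied by the paper's proof of the synchronization case, which replays the same letter $a$.
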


\begin{proof}
    To prove forward $\wqo$-compatibility, assume there is a step $\config \trans{\act, (\existsGuard,\allGuard)} \configPrime$ and $\config \wqo (d,\conf{w})$.
    Note that, as the step is enabled from $\config$, there exists a state $\loc_\exists \in \existsGuard$ that is also in $\supp{\conf{v}} \cup \{c\}$, and every state in $\supp{\conf{v}} \cup \{c\}$ is also in $\allGuard$. 
		By definition of $\wqo$ it follows that for $(d,\conf{w})$ there is at least one process in $\loc_\exists$ and all states in $(C \cup \Loc) \setminus \allGuard$ remain empty. 
		Consequently, the synchronization on $\act$ is also enabled in $(d,\conf{w})$ and forward compatibility follows by forward compatibility of synchronization actions.

    To prove backward $\wqo$-compatibility, assume there is a step $\config \trans{\act, (\existsGuard,\allGuard)} \configPrime$ and $\configPrime \wqo (d',\conf{w}')$. 
		By backward compatibility of synchronization steps, we know that there must exist a configuration $ (d,\conf{w})$ such that $\config \wqo (d,\conf{w})$. 
		By the same reasoning as above the configuration does also satisfy both guards $\existsGuard$ and $\allGuard$.
    \qed
\end{proof}

This result can be considered surprising, as the combination of disjunctive and conjunctive guards for internal transitions leads to undecidability~\cite{EmersonK03}. It is key that we use synchronizations and not internal transitions here.

However, note that in each of the compatibility proofs, it is enough to prove $\wqo$-compatibility for a single (arbitrary) step of the system.
Therefore, we can also mix different types of steps in the same CS.

\begin{theorem}
\label{cor:combinations}
{A CS is fully $\wqo$-compatible if its steps can be partitioned into sets such that $\wqo$-compatibility holds for each set.
In particular, a CS is fully $\wqo$-compatible  if each of its steps is induced by one of the following transition types: internal, disjunctive guard, lossy broadcast, (guarded) synchronization, or ASM.}
\end{theorem}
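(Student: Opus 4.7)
The plan is to exploit the pointwise nature of $\wqo$-compatibility: the forward and backward conditions each quantify universally over individual steps of the CS, demanding the existence of \emph{some} matching step. Accordingly, compatibility distributes over unions of step relations. Concretely, if $\mathcal{T} = \mathcal{T}_1 \cup \dots \cup \mathcal{T}_k$ and each sub-CS $(C, Q, \mathcal{T}_i)$ is fully $\wqo$-compatible, then $\csystem = \csystemdef$ is too: given any step in $\mathcal{T}$, it belongs to some $\mathcal{T}_i$; forward (resp.\ backward) compatibility of $(C, Q, \mathcal{T}_i)$ yields a matching step in $\mathcal{T}_i$; and since $\mathcal{T}_i \subseteq \mathcal{T}$, this is a step of $\csystem$. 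I would write this up explicitly for the forward direction and note that the backward case is entirely symmetric.

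The second claim then follows by instantiating the first with the natural partition of steps by transition type and invoking \cref{lm:compatible-all} (for disjunctive guard, lossy broadcast, synchronization, and ASM, with internal handled as a degenerate subcase via the footnote) and \cref{lm:compatible-guard-synchronizations} (for guarded synchronization). Each of these lemmas produces a matching step of the \emph{same} transition type as the original, so the witness required by the distributivity argument is indeed available inside the appropriate block $\mathcal{T}_i$ of the partition.

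There is no real obstacle: the theorem is a composition principle whose technical content already resides in the prior compatibility lemmas. The only subtlety worth flagging in the writeup is the observation that the witness steps constructed in those lemmas stay within the corresponding $\mathcal{T}_i$ and are never forced to mix transition types, which is what allows the compatibility of subsystems to lift to their union without interference.
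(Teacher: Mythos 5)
Your proposal is correct and matches the paper's own (very brief) justification, which likewise observes that full $\wqo$-compatibility is checked one step at a time, so a matching step found within one block of the partition is in particular a step of the whole system, and then instantiates this with the per-type compatibility lemmas. No gap.
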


\begin{remark}
Note that \cref{cor:combinations} does not make a statement about transitions that combine the characteristics of different types of transitions.
Nonetheless, compatibility with $\wqo$ holds for many extensions of the types of steps we consider.
 In particular, all of them can be extended with disjunctive guards, and even with conjunctions of disjunctive guards, i.e., requiring multiple disjunctive guards to be satisfied at the same time (as in \cite{JacobsS18}).
 Moreover, shared finite-domain steps can have several shared variables encoded into the controller states.
\end{remark}

Compatibility with $\wqo$ is related to what is sometimes called the ``copycat property''.
Informally, this property holds if,
whenever a process can move from $p$ to $p'$ in a step $(c,\conf{v}) \trans{} (c',\conf{v}')$, 
then any additional processes that are in $p$ in a configuration $(c, \conf{v}+i \cdot \conf{p})$ can also move to $p'$ in a sequence of steps $(c,\conf{v}+i \cdot \conf{p}) \trans{*} (c, \conf{v}'+i \cdot \conf{p'})$,
``copying'' the movement of the first process.
We use this property implicitly to prove $\wqo$-compatibility, 
and prove or reprove it for all the systems considered here.

\section{Parameterized Reachability Problems}
\label{sec:reach}

We define the type of parameterized problems we consider and
show that we can solve them in polynomial space using the 01-CS.
Then, we introduce a class of $\wqo$-compatible CSs that have not been considered in the literature before, and use them to prove \PSPACE-hardness of any of these problems.

\subsection{The Cardinality Reachability Problem}

Inspired by Delzanno et al.~\cite{DelzannoSTZ12}, we define a \emph{cardinality constraint} $\varphi$ as a formula in the following grammar, where $c \in C$, $a \in \N$,  and $q \in Q$:
\[ \varphi ::= \control = c \mid \control \neq c \mid \#q \geq a \mid \#q = 0  \mid \varphi \land \varphi \mid \varphi \lor \varphi \]

The satisfaction of cardinality constraints is defined in the natural way, e.g., $(c,\conf{v}) \models \control = c'$ if $c =c'$, and $(c,\conf{v}) \models \#q \geq a$ if $\conf{v}(q) \geq a$.
In~\cite{DelzannoSTZ12}, there are no atomic propositions $\control = c$ nor $\control \neq c$ (since they do not have a controller process), but there is $\#q \leq b$ for any $b \in \N$ (which is not supported in the $01$-abstraction, except for the special case $\#q = 0$).

Given a CS $\csystem$ and a cardinality constraint $\varphi$,
the \emph{cardinality reachability problem (CRP)} asks whether 
a configuration $(c,\conf{v})$ with $(c,\conf{v}) \models \varphi$ is reachable in $\csystem$.
\begin{itemize}
\item
Let $CC[\geq a]$ be the class of cardinality constraints in which atomic propositions 
are only of the form $\#q \geq a$  for any $a \in \N$.

\item
Let $CC[\geq a, =0]$ be the class of cardinality constraints in which atomic propositions 
are only of the form $\#q = 0$ or $\#q \geq a$  for any $a \in \N$.

\item
Let $CC[\control, \geq a, =0]$ be the class of cardinality constraints in which atomic propositions 
are of the form $\control = c, \control \neq c, \#q = 0$ or  $\#q \geq a$  for any $a \in \N$, i.e., the maximal class.
\end{itemize}
For a given $\varphi \in CC[\control, \geq a, =0]$, let $\varphi_\alpha = \varphi[\#q \geq a \mapsto \#q \geq 1]_{a \in \N^+}$, i.e., the result of replacing every atomic proposition of the form $\#q \geq a$ with the proposition $\#q \geq 1$ if $a  \in \N^+$.
We write \emph{CRP for} $S$ to denote that we consider the CRP problem for a cardinality constraint in 
$S \in \{CC[\geq a], CC[ \geq a, =0], CC[\control,\geq a,=0]\}$.

Many parameterized reachability problems can be expressed in CRP format, e.g., coverability, control-state reachability, or the target problem~\cite{DelzannoSZ10} (see \cref{ex:crp-problems} in \cref{app:reach} for details).

\subsection{Deciding the CRP for $\wqo$-compatible Counter Systems}

We show that CRP is \PSPACE-complete for CSs given a light restriction. We start by showing that 
checking CRP in a fully  $\wqo$-compatible CS can be 
reduced to checking the 01-CS.

\begin{lemma}
\label{lm:reduce}
Let $\csystem$ be a fully $\wqo$-compatible CS, $\csystemabstract$ its 01-CS and let $\varphi \in CC[\control,\geq a,=0]$.
\begin{enumerate}
\item %
If a configuration $(c,\conf{v})$ that satisfies $\varphi$ is reachable in $\csystem$, then $(c,\absconf{v})=\alpha(c,\conf{v})$ satisfies $\varphi_\alpha$ and is reachable in $\csystemabstract$. 
\item %
If an abstract configuration $(c,\absconf{v})$ that satisfies $\varphi_\alpha$ is reachable in $\csystemabstract$, then there exists $(c,\conf{v})$ that satisfies $\varphi$, is reachable in $\csystem$, and with $\alpha(c,\conf{v}){=}(c,\absconf{v})$.
\end{enumerate}
\end{lemma}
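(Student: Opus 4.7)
The plan is to use \cref{lem:reachability} as the bridge between concrete and abstract runs, then handle the discrepancy between $\#q \geq a$ in $\varphi$ and $\#q \geq 1$ in $\varphi_\alpha$ by a pumping argument based on backward $\wqo$-compatibility.

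Direction (1) is a straightforward bookkeeping check. By \cref{lem:reachability}, $(c,\absconf{v}) = \alpha(c,\conf{v})$ is reachable in $\csystemabstract$. It remains to verify $(c,\absconf{v}) \models \varphi_\alpha$, which I would show by structural induction on $\varphi$: control-state atoms are preserved verbatim (the controller is untouched by $\alpha$); $\#q = 0$ is preserved since $\alpha$ sends $0$ to $0$; and any $\#q \geq a$ with $a \geq 1$ becomes $\#q \geq 1$ in $\varphi_\alpha$, which is satisfied because $\conf{v}(q) \geq a \geq 1$ forces $\absconf{v}(q) = 1$. Boolean connectives are handled inductively.

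For direction (2), I first apply \cref{lem:reachability} to obtain a concrete run $(c_0,\conf{v}_0) \trans{*} (c_n,\conf{v}_n) = (c,\conf{u})$ with $\alpha(c,\conf{u}) = (c,\absconf{v})$. The obstacle is that $\conf{u}(q)$ may be strictly less than the threshold $a$ of some $\#q \geq a$ atom of $\varphi$, so $(c,\conf{u})$ itself may fail to satisfy $\varphi$. I would repair this by defining a pumped target $(c,\conf{v})$ via $\conf{v}(q) = \max(\conf{u}(q), a^*)$ for $q \in \supp{\absconf{v}}$ and $\conf{v}(q) = 0$ otherwise, where $a^* = \max\bigl(\{1\} \cup \{a \mid \#q \geq a \text{ occurs in } \varphi\}\bigr)$. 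A direct check yields $\alpha(c,\conf{v}) = (c,\absconf{v})$ (supports match), $(c,\conf{v}) \models \varphi$ (structural induction dual to that of direction (1), using that $q \in \supp{\absconf{v}}$ whenever $\#q \geq 1$ is a relevant atom of $\varphi_\alpha$), and $(c,\conf{u}) \wqo (c,\conf{v})$ (supports coincide and $\conf{v} \geq \conf{u}$ pointwise).

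It remains to show that $(c,\conf{v})$ is reachable in $\csystem$. Here I would iterate backward $\wqo$-compatibility along the original run: since $(c,\conf{u}) \wqo (c,\conf{v})$, backward compatibility applied to the last step gives a predecessor $(c_{n-1},\conf{x}_{n-1})$ of $(c,\conf{v})$ with $(c_{n-1},\conf{v}_{n-1}) \wqo (c_{n-1},\conf{x}_{n-1})$, and continuing inductively produces a full run $(c_0,\conf{x}_0) \trans{*} (c,\conf{v})$ with $(c_0,\conf{v}_0) \wqo (c_0,\conf{x}_0)$. In particular $\supp{\conf{x}_0} = \supp{\conf{v}_0} \subseteq Q_0$, so $(c_0,\conf{x}_0)$ is a valid initial configuration and $(c,\conf{v})$ is reachable. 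The main obstacle one might stumble on is recognizing that \cref{lem:reachability} alone cannot control multiplicities; the key idea is to first construct the enlarged target (exploiting that $\wqo$, unlike $\preceq$, preserves supports exactly) and then lift the original run onto it via backward compatibility, rather than trying to massage forward compatibility to produce the right distribution of processes.
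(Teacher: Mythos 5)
Your proof is correct and follows essentially the same route as the paper: part~(1) by direct transfer of atoms through $\alpha$, and part~(2) by pumping the final configuration up to the largest threshold occurring in $\varphi$ and invoking \cref{lem:reachability}. The only difference is that you make explicit the iterated backward-$\wqo$-compatibility lifting needed to reach the \emph{specific} pumped configuration $(c,\conf{v})$ rather than merely some preimage of $(c,\absconf{v})$ — a step the paper delegates wholesale to \cref{lem:reachability} (whose statement literally only yields existence of \emph{some} matching concrete run), so your version is, if anything, the more carefully justified one.
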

\begin{proof}
To prove (1), assume there is a configuration $(c,\conf{v})\models \varphi$ that is reachable in $\csystem$.
Then $(c,\absconf{v})=\alpha(c,\conf{v})$ 
 is reachable in $\csystemabstract$
 by \cref{lem:reachability}, and the fact that $\alpha$ of an initial configuration in $\csystem$
 is an initial configuration in $\csystemabstract$.
 It is easy to see that $(c,\conf{v}) \models \varphi$ entails $(c,\absconf{v}) \models \varphi_\alpha$ by definition.
To prove (2), assume there is a configuration $(c,\absconf{v}) \models \varphi_\alpha$ 
 reachable in $\csystemabstract$.
 By \cref{lm:wqo-01-counter-system}, 
 any $(c,\conf{v}) \succeq_0 (c,\absconf{v})$ maps to $(c,\absconf{v})$ by $\alpha$. 
 Choose $\conf{v}$ such that $\conf{v}(q)=0$ if $\absconf{v}(q)=0$
 and $\conf{v}(q)=A$ otherwise, where $A$ is the highest lower bound in $\varphi$,
 that is $A \ge a$ for all $\#q \geq a$ appearing in  $\varphi$.
 Then $(c,\conf{v}) \models \varphi$, and by \cref{lem:reachability},  $(c,\conf{v})$ is reachable in $\csystem$. 
\qed
\end{proof}

Let $\csystem=(C,Q,\mathcal{T})$ be a $\wqo$-compatible CS.
A product of wqos is a wqo \cite{Kruskal72}, so 
 $\wqo \times \wqo$ is a wqo on $(C \times \N^Q)^2$.
Given a wqo $\preceq$ on a set $S$, it is the case that for
 every subset $X \subseteq S$ there exists a finite subset $Y \subseteq X$ of minimal elements  
such that for every $x \in X$ there exists $y \in Y$ with $y \preceq x$ \cite[Thm.~1.1]{LucaV94}.
This subset is called the \emph{finite basis} of $X$, 
and it is unique if the wqo is antisymmetric (our $\wqo$ is antisymmetric).
Applying this to the step relation $\mathcal{T}$ of $\csystem$ and the wqo $\wqo \times \wqo$ implies the existence of 
 a finite basis $Y$ of $\mathcal{T}$, since $\mathcal{T}\subseteq (C \times \N^Q)^2$.
 Then define 
\begin{center}
 $B_\csystem = max(\conf{v}(q) \mid ((c,\conf{v}),(c',\conf{v}')) \in Y, q \in Q )$,
\end{center}
i.e., the maximal number of user processes per state
in any step in the basis $Y$.

\begin{remark}
The constant $B_\csystem$ is usually small in counter systems. 
For example, for $\csystem$ a counter system with only lossy broadcast steps, $B_\csystem$ is bounded by $|Q|$:
a step depends on one broadcast transition and an arbitrary number of receive transitions.
In the worst case, a minimal step is such that, for a given state $p$, the broadcast is $p \trans{!a} p'$
and there are receive transitions $p \trans{?a} q$ for every $q \in Q \setminus \{ p' \}$.

For disjunctive guards, $B_\csystem \le 2$;
for synchronizations, $B_\csystem \le |Q|$; and
for ASM, $B_\csystem \le 1$.
For a CS with several types of these steps, 
$B_\csystem$ is bounded by the maximum of the $B_\csystem$ given here.
See \cref{rmk:bc} in \cref{app:reach} for details.
\end{remark}

{We say that a fully $\wqo$-compatible CS $\csystem = \csystemdef$ is \emph{polynomially abstractable} if $B_\csystem$ is polynomial in $|C|$ and $|Q|$, and membership in $\mathcal{T}$ can be checked in \Polytime.
All the types of systems that we have considered so far are polynomially abstractable.}

\begin{theorem}
\label{thm:reach-unrestricted}

    Let $\csystem$ be a polynomially abstractable CS for which $B_\csystem$ is known.
    Then the CRP for $\csystem$ and $\varphi \in CC[\control,\geq a, =0]$ is in \PSPACE.

\end{theorem}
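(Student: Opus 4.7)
The plan is to reduce the CRP on $\csystem$ with $\varphi$ to reachability of a $\varphi_\alpha$-satisfying configuration in $\csystemabstract$ via \cref{lm:reduce}, and then to decide the latter by a nondeterministic polynomial-space search through the state space of $\csystemabstract$. Since $\csystemabstract$ has at most $|C| \cdot 2^{|Q|}$ configurations and each is a tuple in $C \times \{0,1\}^Q$, a single abstract configuration together with a binary step counter of length $O(\log|C|+|Q|)$ fits in polynomial space. Because $\NPSPACE = \PSPACE$ by Savitch's theorem, it suffices to give such an \NPSPACE procedure.

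The procedure starts by guessing an initial abstract configuration, then repeatedly guesses a successor while incrementing the step counter, aborting once the counter exceeds $|C| \cdot 2^{|Q|}$ (the bound beyond which no new configuration can be reached). At termination it checks whether the current abstract configuration satisfies $\varphi_\alpha$, which can be done in polynomial time directly from the atomic propositions of $\varphi_\alpha$. The delicate step is verifying, in polynomial space, that a guessed pair $(c,\absconf{v})\trans{}(c',\absconf{v}')$ is indeed an edge of $\csystemabstract$.

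For this I would use the finite basis characterization of $\mathcal{T}$. By definition of $\csystemabstract$, such an edge exists iff there are concrete $\conf{v},\conf{v}' \in \N^Q$ with $\alpha(c,\conf{v})=(c,\absconf{v})$, $\alpha(c',\conf{v}')=(c',\absconf{v}')$, and $((c,\conf{v}),(c',\conf{v}')) \in \mathcal{T}$. Applying the finite basis $Y$ of $\mathcal{T}$ with respect to $\wqo \times \wqo$: because $\wqo$ preserves supports, whenever any such concrete step exists, there is one in $Y$ with source support $\supp{\absconf{v}}$ and target support $\supp{\absconf{v}'}$, and in which every source counter value is bounded by $B_\csystem$. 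Since steps are size-preserving and $B_\csystem$ is polynomial in $|C|+|Q|$, every target counter value is also bounded by $|Q|\cdot B_\csystem$, hence polynomial. The subroutine therefore nondeterministically guesses $\conf{v}$ and $\conf{v}'$ with polynomially bounded counts, checks that $\supp{\conf{v}}=\supp{\absconf{v}}$ and $\supp{\conf{v}'}=\supp{\absconf{v}'}$, and verifies $((c,\conf{v}),(c',\conf{v}')) \in \mathcal{T}$ in polynomial time by the polynomial-abstractability hypothesis.

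The main obstacle is precisely this last step: without a polynomial-size witness property for edges of the 01-CS, checking transition existence could require manipulating concrete configurations of superpolynomial magnitude, and the \NPSPACE bound would be lost. The polynomial bound on $B_\csystem$ together with the antisymmetric basis property of $\wqo$ is exactly what keeps the witnesses polynomially bounded, closing the complexity argument. A \cref{lm:reduce}-based soundness step then lifts any $\varphi_\alpha$-witness found in $\csystemabstract$ to a $\varphi$-witness in $\csystem$, and vice versa for completeness.
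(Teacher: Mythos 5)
Your proposal is correct and follows essentially the same route as the paper's proof: reduce via \cref{lm:reduce} to reachability of a $\varphi_\alpha$-configuration in $\csystemabstract$, explore $\csystemabstract$ nondeterministically in polynomial space, witness each abstract edge by a concrete step whose counters are bounded using the finite basis $Y$ of $\mathcal{T}$ and the constant $B_\csystem$ (with targets bounded by size preservation), check membership in $\mathcal{T}$ in polynomial time via polynomial abstractability, and conclude with Savitch's theorem. The only cosmetic difference is that you make the step counter and the $|Q|\cdot B_\csystem$ bound on target counters explicit, which the paper leaves implicit.
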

\begin{proof}
Let $\csystem=(C,Q,\mathcal{T})$ be a CS that is $\wqo$- compatible and polynomially abstractable for a known $B_\csystem$,
let $\csystemabstract$ be its 01-CS and let $\varphi \in CC[\control,\geq a, =0]$.
By \cref{lm:reduce}, it suffices to check whether there exists  
an abstract configuration $(c,\absconf{v})$ that satisfies $\varphi_\alpha$ 
and that is reachable in $\csystemabstract$.
There are at most $|C| \cdot 2^{|Q|}$ abstract configurations.
We explore the abstract system $\csystemabstract$ non-deterministically,
guessing an initial configuration, then a path from this configuration.
At each step, we check if the current configuration $(c,\absconf{v})$ 
satisfies $\varphi_\alpha$ (this can be done in polynomial time in the number of states).
If it does not, we guess a step $(c,\absconf{v}) \trans{} (c',\absconf{v'})$ in $\csystemabstract$.
To do this, we guess a configuration $(c,\conf{v})$ of $\csystem$
with $1 \le \conf{v}(q) \le B_\csystem$ for all $q$ such that $\absconf{v}(q)=1$, 
and with $\conf{v}(q) =0$ elsewhere. 
We guess a configuration $(c',\conf{v}')$ of the same size as $(c,\conf{v})$.
We check if $(c,\conf{v}) \rightarrow (c',\conf{v}')$ is a step of $\mathcal{T}$ (which by assumption can be done in polynomial time).
If it is, then $\alpha(c,\conf{v}) \rightarrow \alpha(c',\conf{v}')$ is a step in $\csystemabstract$.

Let $Y$ be the finite basis of $\mathcal{T}$.
It is enough to check whether there exists a step with only counters under $B_\csystem$ 
because 
$(c,\absconf{v}) \rightarrow (c',\absconf{v'})$ is a step in $\csystemabstract$
if and only if there exists a step $(c,\conf{v}) \rightarrow (c',\conf{v}')\in Y$
with $\alpha(c,\conf{v}) = (c,\absconf{v})$ 
and $ \alpha(c',\conf{v}') = (c',\absconf{v'})$.
Indeed, $(c,\conf{v}) \rightarrow (c',\conf{v}')\in Y$ implies an abstract step because
$Y \subseteq \mathcal{T}$.
In the other direction, if $(c,\absconf{v}) \rightarrow (c',\absconf{v'})$ is a step in $\csystemabstract$ 
then there exists  $(d,\conf{w}) \rightarrow (d',\conf{w}')\in \mathcal{T}$
with $\alpha(d,\conf{w}) = (c,\absconf{v})$ 
and $ \alpha(d',\conf{w}') = (c',\absconf{v'})$.
By definition of $Y$ there exists $(c,\conf{v}) \rightarrow (c',\conf{v}')\in Y$ with
$(c,\conf{v}) \wqo (d,\conf{w})$ and $(c',\conf{v}') \wqo (d',\conf{w}')$.
By \cref{lm:wqo-01-counter-system}, this entails 
$\alpha(c,\conf{v}) = (c,\absconf{v})$ 
and $ \alpha(c',\conf{v}') = (c',\absconf{v'})$.
This procedure is in polynomial space in the number of states and $B_\csystem$ because  
each configuration can be written in polynomial space, 
 all checks can be performed in polynomial time, and
 by Savitch's Theorem $\PSPACE = \NPSPACE$ so we can give a non-deterministic algorithm. 
\qed
\end{proof}

\smartpar{\PSPACE-Hardness of the CRP}
\label{sec:hardness}
Our upper bound on the complexity of CRP for $\wqo$-compatible CSs is higher than some of the existing complexity results for systems that fall into this class\footnote{E.g., for RBN without a controller, CRP for $CC[\geq 1]$ is in \Polytime, and for $CC[\geq 1,=0]$ it is in \NP~\cite{DelzannoSZ12}.}.
We show that this complexity is unavoidable, implying that the class of fully $\wqo$-compatible systems is more expressive than its instances considered in the literature.

We prove \PSPACE-hardness by a reduction of the intersection non-emptiness problem for deterministic finite automata (DFA) \cite{conf/focs/Kozen77} to the CRP. The detailed construction can be found in \cref{app:hardness}. The idea is to view the DFA  as systems communicating via synchronization transitions, where the set of actions is the input alphabet.
The intersection of the languages accepted by the automata is then non-empty iff some configuration is reachable such that in each automaton there is at least one accepting state covered by a process.
This constraint can be encoded into a constraint $\varphi \in CC[\geq 1]$ and the construction does not use a controller, therefore deciding the CRP even in this restricted setting is \PSPACE-hard.
As a consequence, we get \PSPACE-completeness for the CRP of $\wqo$-compatible systems.

\begin{theorem}
  \label{thm:CRP-PSPACE-complete}
    Let $\csystem$ be a polynomially abstractable CS for which $B_\csystem$ is known.
    Then the CRP for $\csystem$ and $\varphi \in CC[\control,\geq a, =0]$ is \PSPACE-complete.
\end{theorem}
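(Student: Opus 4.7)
The upper bound is already given by \cref{thm:reach-unrestricted}, so the plan is to establish the matching \PSPACE lower bound by following the hint of the excerpt: reduce the intersection non-emptiness problem for deterministic finite automata, shown \PSPACE-complete by Kozen, to the CRP.

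First I would fix an instance consisting of DFAs $A_1,\dots,A_k$ over a common alphabet $\Sigma$, with state sets $Q_1,\dots,Q_k$, initial states $q^0_i$, and accepting sets $F_i$. After a standard polynomial-time completion I may assume each $A_i$ is complete, i.e.\ has an $a$-transition from every state for every letter $a \in \Sigma$. I would then build a CS $\csystem$ with a trivial one-state controller, user state set $Q = \bigsqcup_{i=1}^{k} Q_i$ (disjoint union), initial user states $Q_0 = \{q^0_1,\dots,q^0_k\}$, and, for every $A_i$-transition on $a$ from $q$ to $q'$, a synchronization transition $q \trans{a} q'$ over label set $\Sigma$. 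By \cref{lm:compatible-all} the resulting CS is fully $\wqo$-compatible, by inspection it is polynomially abstractable, and the explicit bound $B_\csystem \le |Q|$ is available. The reachability target I would use is the cardinality constraint $\varphi := \bigwedge_{i=1}^{k} \bigvee_{q \in F_i} \#q \ge 1$, which lies in $CC[\ge 1] \subseteq CC[\control,\ge a,=0]$.

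For correctness I would argue both directions. Since the $Q_i$ are disjoint and every synchronization transition keeps a process inside its own $Q_i$, the support of any reachable configuration stays confined to those $Q_i$ whose initial state is occupied in the initial configuration. Hence satisfying $\varphi$ forces at least one process to start in each $q^0_i$. Starting from such a configuration and firing synchronization steps along a word $w = a_1 \cdots a_n$, determinism and completeness of each $A_i$ ensure that every process originally placed at $q^0_i$ ends up precisely at $\delta_i^{*}(q^0_i,w)$; any additional processes the adversary places in $Q_0$ behave identically to the ``canonical'' one of their DFA and can only cover the same state. Therefore a configuration satisfying $\varphi$ is reachable in $\csystem$ iff $\bigcap_{i=1}^k L(A_i) \neq \emptyset$. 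The construction is polynomial, so \PSPACE-hardness follows, and together with \cref{thm:reach-unrestricted} this yields \PSPACE-completeness.

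The main obstacle I anticipate is the ``stay put if no matching transition is available'' clause in the semantics of synchronization steps: left unaddressed, a DFA whose process lacks an $a$-transition at its current state would silently skip the letter, making the product accept strictly more words than $\bigcap_i L(A_i)$. Completing each DFA before the translation is the clean fix. A secondary subtlety is ruling out spurious satisfaction of $\varphi$ from nonstandard initial configurations; this follows from the disjointness of the $Q_i$ and the fact that extra processes inside a given $Q_i$ evolve in lockstep with the canonical one. It is worth noting that the construction uses neither a controller nor bounds of the form $\#q \ge a$ with $a > 1$, so the lower bound matches the upper bound even for the most restricted subclass of constraints considered.
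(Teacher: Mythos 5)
Your proposal is correct and follows essentially the same route as the paper: the upper bound is inherited from \cref{thm:reach-unrestricted}, and the lower bound is the same reduction from DFA intersection non-emptiness to CRP via synchronization transitions over the disjoint union of the automata's state sets, with the constraint $\bigwedge_i \bigvee_{q \in F_i} \#q \ge 1$ and the same reliance on completeness of the DFAs to neutralize the ``stay put'' clause of synchronization semantics. The only cosmetic difference is that the paper's construction uses no controller at all rather than a trivial one-state controller.
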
\nointerlineskip

\section{Parameterized Model Checking of Trace Properties}
\label{sec:trace}

A large part of the parameterized verification literature has focused on model checking of stutter-insensitive trace properties of a single process, or a fixed number $k$ of processes~\cite{EmersonK00,EmersonK03,AminofKRSV18,AusserlechnerJK16,JacobsS18}.
We sketch how our framework improves existing results in this area, including for liveness properties.

\smartpar{Trace Properties}
Given a $CS$ $\csystem = \csystemdef$, a \emph{trace} of the controller is a finite word $w \in C^*$ obtained from a run $\rho$ of $\csystem$ by projection on the first element of each configuration, and removing duplicate adjacent letters.
We denote by $\traces{\csystem}$ the set of all finite traces that can be obtained from runs of $\csystem$, and by $\tracesinf{\csystem}$ the set of \emph{infinite} traces.
Define similarly $\traces{\csystemabstract}$ and $\tracesinf{\csystemabstract}$ for the $01$-CS.
A \emph{safety property} $\varphi$ is a prefix-closed subset of $C^*$.
We say that $\csystem$ \emph{satisfies} the safety property $\varphi$, denoted $\csystem \models \varphi$, if $\traces{\csystem} \subseteq \varphi$.

\smartpar{Existing Results}
Many of the existing results for deciding trace properties are based on cutoffs~\cite{EmersonK00,EmersonK03,AusserlechnerJK16,JacobsS18}.
That is, they view the system as a parallel composition $A{\parallel}B^n$ of controller and user processes, and derive a \emph{cutoff} for $n$, i.e., a number $c$ such that $A{\parallel}B^c \models \varphi \iff \forall n \geq c: A{\parallel}B^n \models \varphi$.
This reduces the problem to a (decidable) model checking problem over a finite-state system.
However, since the cutoff $c$ is usually linear in $\card{B}$, the state space of this finite system is in the order of $O\left(\card{A} \times \card{B}^{\card{B}}\right)$.

As an improvement of these results, Aminof et al.~\cite{AminofKRSV18} have shown that $\tracesinf{A}$ can be recognized by a B\"uchi-automaton of size $O(\card{A}^2 \cdot 2^{\card{B}})$, and the same for $\tracesinf{B}$, the infinite traces of a single user process in the parameterized system.

\smartpar{Deciding Trace Properties in the $01$-CS}
As a direct consequence of \cref{lem:reachability} we get:

\begin{lemma}
\label{lem:PMCP}
If CS $\csystem$ is fully $\wqo$-compatible and $\csystemabstract$ is its $01$-CS, then $\traces{\csystem}=\traces{\csystemabstract}$.
\end{lemma}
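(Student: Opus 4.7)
The plan is to observe that \cref{lem:reachability} already establishes a step-by-step correspondence between runs of $\csystem$ and runs of $\csystemabstract$, and that this correspondence preserves the first component of each configuration, i.e., the controller state. Since the trace is obtained by projection on the controller state followed by stutter removal, this preservation is exactly what is needed to equate the two trace sets.

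I would prove each inclusion separately. For $\traces{\csystem} \subseteq \traces{\csystemabstract}$, take any $w \in \traces{\csystem}$, obtained from some run $(c_0,\conf{v}_0) \to \cdots \to (c_n,\conf{v}_n)$. By the forward direction of \cref{lem:reachability}, there is a run $(c_0,\conf{v}_0^\alpha) \to \cdots \to (c_n,\conf{v}_n^\alpha)$ in $\csystemabstract$ with $\alpha(c_i,\conf{v}_i) = (c_i,\conf{v}_i^\alpha)$. Since $\alpha$ does not change the controller component, the sequence of controller states along the abstract run coincides with that of the concrete run, and therefore the stutter-removed projection yields the same trace $w$.

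For the reverse inclusion $\traces{\csystemabstract} \subseteq \traces{\csystem}$, take any $w \in \traces{\csystemabstract}$ obtained from an abstract run. Noting that the abstract initial configuration $(c_0,\absconf{v}_0)$ is the $\alpha$-image of any concrete initial configuration $(c_0,\conf{v}_0)$ with the same support (and such a concrete initial configuration exists by definition of initial configurations in $\csystem$), the backward direction of \cref{lem:reachability} supplies a concrete run with the same sequence of first components, hence the same trace $w$ after stutter removal.

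I do not expect any real obstacle: the lemma is essentially a corollary of \cref{lem:reachability}, since that lemma is already phrased as a bijective correspondence of runs that preserves controller states. The only small care needed is to point out explicitly that (i) $\alpha$ is the identity on the controller state, and (ii) initial configurations on either side match up under $\alpha$; both are immediate from the definitions in \cref{sec:prelim}.
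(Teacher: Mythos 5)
Your argument is correct and matches the paper exactly: the paper states \cref{lem:PMCP} as a direct consequence of \cref{lem:reachability}, relying on precisely the observations you make explicit, namely that the run correspondence of \cref{lem:reachability} preserves the controller component and that initial configurations match up under $\alpha$. Your write-up simply spells out the two inclusions that the paper leaves implicit.
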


Note that the size of $\csystemabstract$ is $\card{C} \cdot 2^{\card{Q}}$, i.e., smaller than the B\"uchi automaton in the result of Aminof et al.~\cite{AminofKRSV18}.
On the other hand, our result in general only holds for finite traces.
To see that $01$-abstraction is not precise for infinite traces, consider the following example.

\begin{example}
\begin{figure}[h!]
    \centering
        \begin{tikzpicture}
            \node [location, initial, initial text=] (qinit) at (0,0) {$\initloc$};
            \node [location] (q1) at (2,0) {$\loc_1$};

            \path[->]
                (qinit) edge [] node [above] {$!a$} (q1)
                (qinit) edge [loop above] node [above] {$?a$} (qinit)
            ;
        \end{tikzpicture}
        \hspace*{2em}
        \begin{tikzpicture}
            \node [] (c0) at (0,0) {$(1,0)$};
            \node [] (c1) at (2,0) {$(1,1)$};

            \path[->] 
                (c0) edge [] node [above] {$a$} (c1)
                (c1) edge [loop above] node [] {${a}^\omega$} (c1)
            ;
        \end{tikzpicture}

    \caption{Example system with spurious loop}
    \label[figure]{fig:spurious}
\end{figure}%
    Consider the CS $\csystem$ based on lossy broadcast depicted in~\cref{fig:spurious}.
		To its right we depict an infinite run of its $01$-CS that executes a lossy broadcast on $a$ infinitely often: on the first $a$, it moves from $\conf{v}= (1,0)$ to $\conf{v'}=(1,1)$, and then any further application loops in $\conf{v'}=(1,1)$.
		However, such a behavior is not possible in $\csystem$: any concrete run of $\csystem$ will start with a fixed number $n$ of processes, and therefore has to stop after $n$ steps.
\end{example}

Despite this, we can extend \cref{lem:PMCP} to infinite traces for \disjunctive{}s:

\begin{restatable}{lemma}{LmPMCPinfinite}
\label{lem:PMCP-infinite}
If $\csystem$ is a $\wqo$-compatible CS induced by disjunctive guard transitions, then $\tracesinf{\csystem}=\tracesinf{\csystemabstract}$.
\end{restatable}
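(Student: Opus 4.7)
The direction $\tracesinf{\csystem} \subseteq \tracesinf{\csystemabstract}$ is immediate: for any infinite concrete run of $\csystem$, applying $\alpha$ pointwise gives an infinite run of $\csystemabstract$, and since $\alpha$ leaves the controller state unchanged, the two runs produce the same trace.

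For the reverse inclusion, fix $w \in \tracesinf{\csystemabstract}$ witnessed by an infinite abstract run $\rho^\alpha$. The plan is to construct an infinite concrete run of $\csystem$ that also produces $w$. Since $|C \times \{0,1\}^Q|$ is finite, some abstract configuration $(d, \absconf{u})$ is visited infinitely often along $\rho^\alpha$, so I decompose $\rho^\alpha$ into a finite prefix $\pi$ reaching $(d, \absconf{u})$ followed by an infinite sequence of finite abstract loops $\sigma_1, \sigma_2, \ldots$, each a closed walk from $(d, \absconf{u})$ back to itself. I then aim to concretize $\pi$ and every $\sigma_j$ around a single concrete pivot $(d, \conf{w}^*)$ with $\alpha(d, \conf{w}^*) = (d, \absconf{u})$, so that concatenation yields an infinite concrete run of $\csystem$ whose trace is $w$.

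The central step is a loop-closure claim tailored to disjunctive guards: for a sufficiently large pivot $(d, \conf{w}^*)$, every abstract loop $\sigma$ at $(d, \absconf{u})$ can be realized by a concrete run from $(d, \conf{w}^*)$ back to the same $(d, \conf{w}^*)$. To prove this, I view $\sigma$ as a closed walk on abstract disjunctive-guard transitions $p_i \to q_i$; the unit assignment placing one unit of flow on each traversed edge is automatically conservative at every vertex, because the walk is closed. The disjunctive semantics let us pick freely how many processes participate in each step (unlike lossy broadcast, where the need to actually broadcast introduces a rigidity that breaks infinite simulation), so a conservative integer flow translates directly into a concrete loop provided $\conf{w}^*$ carries enough processes at each state of $\supp{\absconf{u}}$ to absorb the intermediate fluctuations. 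Then \cref{lem:reachability} concretizes $\pi$ to some configuration with projection $(d, \absconf{u})$, and the copycat property of disjunctive systems -- that we may freely scale the initial configuration by adding processes in $Q_0$ and leave them in place at every subsequent step -- lets us arrange the prefix to end exactly at $(d, \conf{w}^*)$.

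The main obstacle I expect is handling abstract steps that \emph{empty} a state $p$, which force the concrete run to move all copies at $p$ in that step and thereby constrain the flow rigidly. I would address this by choosing $\conf{w}^*$ with a reservoir polynomial in $|Q|$ and the maximum loop length (itself bounded by the abstract state-space size $|C| \cdot 2^{|Q|}$), and by scheduling the concrete moves so that a state about to be emptied holds exactly as many copies as the flow prescribes at that step, while the remaining copies at other states act as witnesses keeping all disjunctive guards satisfied throughout the loop.
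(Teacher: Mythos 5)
Your first direction is fine, but the reverse inclusion has a genuine gap: the loop-closure claim is false. You argue that an abstract loop $\sigma$ at $(d,\absconf{u})$ yields a conservative flow on $Q$ "because the walk is closed," but the walk is closed only in the abstract configuration space $\{0,1\}^Q$, not in the flow sense on the state graph. Concretely: take states $a,b$ both in $\supp{\absconf{u}}$ and a single abstract step induced by a transition $a\trans{\existsGuard} b$ in which $a$ remains occupied. This is an abstract self-loop at $(d,\absconf{u})$, yet every concrete realization must move at least one process out of $a$ and none back, so the count at $a$ strictly decreases. No pivot $(d,\conf{w}^*)$, however large, can close this loop exactly, and iterating it drains $a$ after $\conf{w}^*(a)-1$ rounds, after which the abstract step (which needs $a$ to stay occupied) can no longer be realized. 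Your proposed remedy (reservoirs, scheduling emptied states) addresses emptying but not this net-flow obstruction, which is the actual reason $01$-abstraction is unsound for infinite traces in general (cf.\ the spurious-loop example in \cref{fig:spurious}). More fundamentally, any strategy that tries to concretize the \emph{given} abstract run exactly is doomed; you must be willing to replace it by a different run with the same trace.

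The paper's proof does exactly that, exploiting what is special about disjunctive guards. It first restricts to a variant $\csystemabstract'$ of the abstract system that never un-occupies a state: since for a disjunctive step one may always leave an extra process behind at the source state, and since occupying more states can only help satisfy guards (they are existential and positive), one has $\tracesinf{\csystemabstract}\subseteq\tracesinf{\csystemabstract'}$. Along any infinite run of $\csystemabstract'$ the occupied set is monotone and hence stabilizes after finitely many steps; from that point on the user processes need never move again, and the stationary processes permanently witness every disjunctive guard the controller will ever need. The finite prefix is concretized via \cref{lem:reachability} and the tail is a controller-only run, giving $\tracesinf{\csystemabstract'}\subseteq\tracesinf{\csystem}$. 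You should replace your loop-concretization argument with this monotonization-and-stabilization argument (or something equivalent that abandons exact realization of the abstract run).
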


The proof for the lemma can be found in \cref{app:sec-trace}.

Note that it is easy to obtain a B\"uchi automaton $B$ that recognizes the same language as $\csystemabstract'$: 
The states of $B$ are the configurations of $\csystemabstract$, plus a special sink state $\bot$. 
Labels of transitions in $B$ are from the set of minimal steps $\Dmin$.
There is a transition between two automaton states with label $D \in \Dmin$ if both are configurations and there is a transition based on $D$ between them in $\csystemabstract'$, and between a configuration and $\bot$ there is a transition labeled $D$ if there is no transition based on $D$ and starting in this configuration in $\csystemabstract'$.
Finally, there is a self-loop with all labels from $\Dmin$ on $\bot$, and every state except $\bot$ is accepting.

Also note that we get corresponding results to \cref{lem:PMCP} and \cref{lem:PMCP-infinite} for traces of a user process,
by encoding one copy of the user process into the controller (i.e., the controller simulates the product of the original controller and one user process), such that we can directly observe the traces of one fixed user process.
The same construction works for any fixed number $k$ of user processes.
\cref{tab:PMCP} summarizes our results on trace properties, and compares them to existing results from the literature.

\begin{table*}[b]
\caption{Decidability and Complexity of PMCP over finite and infinite traces, Comparison of Our Results to Existing Results} 
\label{tab:PMCP}
\resizebox{\textwidth}{!}{
\begin{tabular}{l|l|l||l|l|l}
\toprule
\multicolumn{3}{c}{Our Results} & \multicolumn{3}{c}{Existing Results}\\
System Class & Traces & Result & System Class & Traces & Results\\
\midrule
\textbf{$\mathbf{\wqo}$-compatible systems} & finite & $\traces{\csystem}=\traces{\csystemabstract}$ &  \disjunctive{}s & finite & $\traces{\csystem} = \traces{B}$~\cite{AminofKRSV18}\\
& & \textbf{(where $\mathbf{|\csystemabstract|=\card{C} \cdot 2^{\card{Q}}}$)} & & & (where $\card{B}=\card{C}^2\cdot 2^{\card{Q}}$)\\
\midrule
disjunctive systems & infinite & $\tracesinf{\csystem}=\tracesinf{\csystemabstract}$ & disjunctive systems & infinite & $\traces{\csystem} = \traces{B}$~\cite{AminofKRSV18}\\
& & \textbf{(where $\mathbf{|\csystemabstract|=\card{C} \cdot 2^{\card{Q}}}$)} & & & (where $\card{B}=\card{C}^2\cdot 2^{\card{Q}}$)\\
\bottomrule
\end{tabular}
} \vskip-0.2cm
\end{table*}

\smartpar{Automata-based Model Checking}
\cref{lem:PMCP} and \cref{lem:PMCP-infinite} state language equivalences, but do not directly solve the PMCP.
We assume that the specification $\varphi$ is given in the form of an automaton $\mathcal{A}_\varphi$ that accepts the language $\varphi$.
By \cref{lem:PMCP}, for safety properties it is then enough to check whether the product $\csystemabstract \times \mathcal{A}_\varphi$ 
can reach a state which is non-accepting for $\mathcal{A}_\varphi$, and similarly %
for the PMCP over infinite traces based on \cref{lem:PMCP-infinite}.

\section{Transition Counter Systems}
\label{sec:small-tcs}

In this section, we give a restriction on $\wqo$-compatible CSs,
and show that CRP for $CC[\geq a]$ and  $CC[\geq a, =0]$ is \Polytime{}- and \NP{}-complete respectively.
This restriction is inspired by Delzanno et al.~\cite{DelzannoSTZ12}, who 
study reconfigurable broadcast networks (RBN) without a controller process.
Accordingly, they consider the CRP for cardinality constraints without the propositions $\control = c, \control \neq c$.
They show that for RBN, CRP for $CC[\geq 1]$ is \Polytime-complete and CRP for $CC[\geq 1, =0]$ is \NP-complete, where
$CC[\geq 1]$ are the cardinality constraints in which atomic propositions are only of the form $\#q \geq 1$.

\smartpar{Transition Counter Systems}
We consider CSs
in which steps are based on local transitions between states,
as is the case for the system models we have studied in this paper. 
Here, we do not consider a controller process, i.e., configurations are in $\N^Q$.

A CS \emph{without controller} is
$\csystem = (Q, \mathcal{T})$, where
$Q$ is a finite set of states, 
 the step relation is $\mathcal{T} \subseteq \N^Q \times  \N^Q$, and
configurations are $\conf{v} \in \N^Q$.
The results for CSs with controller in the previous sections still hold 
for CSs without controller:
given $\csystem = (Q, \mathcal{T})$  without controller, add to it a
set $C=\set{c}$ and consider configurations in which one process is in $c$.
Since no steps of $\mathcal{T}$ involve $C$, this process cannot move and can be ignored.

Fix $\csystem = (Q, \mathcal{T})$ a CS without controller,
and $\delta \subseteq Q^2$ a set of transitions between states 
(the transitions may have labels, but we ignore these for now).
We  denote transitions $(p,p')$ by $p \trans{} p'$.
Given a multiset of transitions $D \in \N^\delta$,
 let $\pre{D}$ be the multiset of states $p$ such that 
 $\pre{D}(p)=m$ if there are $m$ transitions in $D$
 of the form  $p \trans{} p'$ for some $p'$.
Let $\post{D}$ be the multiset of states $p'$ such that 
 $\post{D}(p')=m$ if there are $m$ transitions in $D$
 of the form  $p \trans{} p'$ for some $p$.

Let $\conf{v},\conf{v}'$ be two configurations of $\N^Q$, and   
let $D \in \N^\delta$ be a multiset of transitions.
We say $\conf{v}'$ is obtained by \emph{applying} $D$ to $\conf{v}$ if
 $\conf{v}' = \conf{v} - \sum_{i=1}^k \conf{p_i} + \sum_{i=1}^k \conf{p_i'},$
where $(p_1,p_1'), \ldots, (p_k,p_k')$ are the transitions of $D$ enumerated with multiplicity.
Note that the result is only well-defined if $\conf{v}(p)\ge \pre D (p)$ for all $p \in Q$,
and $\supp{\pre D} \subseteq \supp{\conf{v}}$ (recall that $\supp{\conf{m}}$ is the support of a multiset $\conf{m}$),
and that these conditions are ensured by our definitions of steps in \cref{sec:transition-types}. 

A transition counter system 
is characterized by a finite set $\Dmin$ of ``minimal steps'',
where a $D\in \Dmin$ is a multiset of transitions of $\delta$
such that each transition appears at most once in $D$, 
i.e., $D \in \set{0,1}^\delta$.
Intuitively,  $D$ is a group of transitions that must be taken together in a step, 
and this group is of minimal size.
All steps of a transition CS are based on a  $D\in\Dmin$, by applying each transition of $D$ one or more times.
 
 \begin{definition}
 \label{def:transition-cs}
A CS without a controller 
$\csystem = (Q, \mathcal{T})$ is a  \emph{transition counter system (TCS)}
if there exists a finite set of transitions $\delta\subseteq Q^2$ and
a finite set of minimal steps $\Dmin \subseteq \set{0,1}^\delta$ such that
$\conf{v} \trans{} \conf{v}'$ is a step of $\csystem$ if and only if
$\conf{v}'$ is obtained by applying $D\in \N^\delta$ to $\conf{v}$,
where $D$ is a multiset of local transitions such that there exists a $D_0 \in \Dmin$
with $\supp{D}=\supp{D_0}$, i.e. $D$ and $D_0$ are non-zero on the same transitions.
\end{definition}
Notice that TCSs are entirely defined by the tuple $(Q, \delta, \Dmin)$, and
they are always polynomially abstractable: testing membership of a step in $\mathcal{T}$ is always in \Polytime, and $B_\mathcal{C} \leq \card{Q}$.

\begin{restatable}{lemma}{LmTransitionSystems}
\label{lm:transition-systems}
\begin{enumerate}
\item CSs without controller and with only lossy broadcast steps, or only disjunctive guard steps,
 are TCSs.
\item CSs without controller and with only synchronization steps are not TCSs.
\end{enumerate}
\end{restatable}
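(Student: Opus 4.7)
The plan is to prove the two parts by explicit construction and by counterexample, respectively.

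For part (1), I will give an explicit pair $(\delta, \Dmin)$ for each class and verify step equivalence. For lossy broadcast, I set $\delta$ to be the set of all local transitions of $\csystem$, and for each broadcast transition $p \trans{!a} p'$ together with each subset $R$ of matching receive transitions, I put $\{p \trans{!a} p'\} \cup R$ into $\Dmin$. A TCS step based on such a $D_0$ forces the broadcast to fire and each receive in $R$ to fire at least once each (by the support constraint), exactly matching a lossy broadcast step whose receive set is $R$; varying $R$, including $R = \emptyset$, covers all possibilities, and finiteness of $\Dmin \subseteq \{0,1\}^\delta$ is immediate. For disjunctive guards, I set $\delta$ to contain every disjunctive transition $(p,q)$ together with a self-loop $(g,g)$ for each state $g \in Q$ appearing in some guard, and for each $p \trans{\existsGuard} q$ and each $g \in \existsGuard$, I put $\{(p,q), (g,g)\}$ into $\Dmin$. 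The loop acts as a witness: $\supp{D} = \supp{D_0}$ forces $(g,g)$ to fire at least once, demanding $\conf{v}(g) \ge 1$ in the source, and since the loop preserves $\conf{v}(g)$, the witness carries to the target. A case split on $g \neq p$ vs $g = p$ handles both scenarios; when $g = p$, the loop's multiplicity is added to $\pre{D}(p)$, enforcing $\conf{v}(p) \ge 2$, correctly requiring an extra non-moving witness in $p$ distinct from the moving processes.

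For part (2), the plan is to exhibit a minimal synchronization CS and argue no TCS matches. Take $Q = \{q_1, q_2\}$ with sole sync transition $q_1 \trans{a} q_2$; then the only non-trivial step from $(k, 0)$ with $k \ge 1$ is $(k, 0) \trans{} (0, k)$, since every process in $q_1$ must take the $a$-action. Suppose a TCS $(\delta, \Dmin)$ realizes exactly these steps. The step $(2,0) \trans{} (0,2)$ must come from some $D_0 \in \Dmin$ and some $D$ with $\supp{D} = \supp{D_0}$; since $\conf{v}(q_2) = 0$ in the source, no transition with source $q_2$ can lie in $\supp{D}$, so $\supp{D_0} \subseteq \{(q_1, q_1), (q_1, q_2)\}$ with $(q_1, q_2) \in \supp{D_0}$. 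If $\supp{D_0} = \{(q_1, q_2)\}$, then $D' = \{(q_1, q_2) \times 1\}$ is also a valid application, yielding the TCS step $(2, 0) \trans{} (1, 1)$, which is not a step of the sync CS. If instead $\supp{D_0}$ also contains $(q_1, q_1)$, then the at-least-one multiplicity condition forces $\pre{D}(q_1) \ge 3$, contradicting $\conf{v}(q_1) = 2$ and ruling out the required step entirely. Either way we reach a contradiction.

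The main obstacle is part (2): I must carefully rule out creative encodings using extra transitions (loops or reverse moves) that might imitate synchronization's source-dependent multiplicity. The central observation is that the TCS condition $\supp{D} = \supp{D_0}$ only lower-bounds transition multiplicities by one, never pinning them to the exact source count that synchronization demands; the construction above exploits precisely this slack to exhibit a spurious step.
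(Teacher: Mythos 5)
Your proposal is correct and follows essentially the same route as the paper: part (1) uses the identical encodings (broadcast-plus-receive-subset sets for lossy broadcast, transition-plus-guard-witness-self-loop pairs for disjunctive guards), and part (2) rests on the same key observation that $\supp{D}=\supp{D_0}$ only lower-bounds multiplicities and so cannot force every process in a state to move. If anything, your part (2) is tighter than the paper's: the paper argues generically under the assumption that some $D\in\Dmin$ contains $p\trans{a}q$ but no self-loop on $p$, whereas your concrete two-state counterexample with the explicit case split on whether $(q_1,q_1)\in\supp{D_0}$ rules out all candidate $\Dmin$ and thus closes that small gap.
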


\begin{proof}
To prove (1), first consider a counter system $\csystem= (Q, \mathcal{T})$ without controller, and
with only lossy broadcast steps based on broadcast and receive transitions from a set $\delta$.
These are still well defined, as the definition did not distinguish between controller and user processes.
Define $\Dmin$ to be the set of $D \in \set{0,1}^\delta$ such that for each
 broadcast transition $t_0=p_0 \trans{!a} p'_0 \in \delta$
 and each subset of receive transitions $t_1=p_1 \trans{?a} p'_1, \ldots, t_k=p_k \trans{?a} p'_k \in \delta$ for the same letter $a$,
there is a $D= \conf{t}_0 + \conf{t}_1 + \ldots \conf{t}_k $.
Then $\csystem$ is equivalent to the transition counter system $\mathcal{D}=(Q, \delta, \Dmin)$
in the following sense:
there is a step $\conf{v} \trans{} \conf{w}$ in $\csystem$ if and only if
there is a step $\conf{v} \trans{} \conf{w}$ in $\mathcal{D}$.

Now, consider a counter system $\csystem= (Q, \mathcal{T})$ without controller, and
with only disjunctive guard steps based on transitions from a set $\delta$.
These are still well defined, as the definition did not distinguish between controller and user processes.
Define $\Dmin$ to be the set of $D \in \set{0,1}^\delta$
such that  for each pair of
transitions $t= p \trans{\existsGuard} q \in \delta$ and $r \trans{} r$ for $r \in \existsGuard$,
there is a $D= \conf{t} + \conf{r}$.
Then $\csystem$ is equivalent to the transition counter system $\mathcal{D}=(Q, \delta, \Dmin)$,
in the same sense as above.

Regarding (2), consider a counter system $\csystem= (Q, \mathcal{T})$ without controller, and
with only synchronization steps based on transitions from a set $\delta$.
These are still well defined, as the definition did not distinguish between controller and user processes.
Assume there exists $\Dmin$ such that
there is a step $(c,\conf{v}) \trans{} (d,\conf{w})$ in $\csystem$ if and only if
there is a step $(c,\conf{v}) \trans{} (d,\conf{w})$ in the transition counter system $\mathcal{D}=(Q, \delta, \Dmin)$.
Assume there is a $D\in \Dmin$ containing a transition $p\trans{a} q$ and no transition $p\trans{a} p$.
Consider a configuration $\conf{v}$ such that $\conf{v}(p) = \pre D (p)$ for all $p \in Q$.
Applying $D$ to $\conf{v}$ defines a step $\conf{v} \trans{} \conf{v}'$ in $\mathcal{D}$.
Now consider configuration $\conf{v}'' = \conf{v} + \conf{p}$.
By definition of a  transition counter system,
 applying $D$ to $\conf{v}''$ also defines a step in $\mathcal{D}$.
However, this is not a step of $\csystem$ because there is a process of $\conf{v}''$ in $p$
which takes no transition in the step.
This is not possible in a synchronization step, where all processes in states with an $a$-labeled transition
must take an $a$-labeled transition.
Therefore counter systems without controller with only synchronization steps
may not be equivalent to transition counter systems.
\qed
\end{proof}

It is known that RBN can simulate ASM systems~\cite{Gandalf21}, so they can indirectly be modeled as TCSs.
We now show that TCSs are $\wqo$-compatible by design.

\begin{restatable}{lemma}{LmTCS}
\label{lm:tcs}
TCSs are fully $\wqo$-compatible.
\end{restatable}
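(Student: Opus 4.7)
I will establish full $\wqo$-compatibility by handling forward and backward compatibility separately. In both cases, given a witnessing multiset $D$ of transitions (with $\supp{D} = \supp{D_0}$ for some $D_0 \in \Dmin$), I will construct an augmented multiset $D'$ with $\supp{D'} = \supp{D}$, so that $D'$ still witnesses a valid TCS step, and whose application reroutes the extra processes from the larger configuration appropriately. Only existing transitions of $D$ will have their multiplicities increased, so the support (and hence the membership in the class of TCS-steps of the same ``shape'' as $D_0$) is preserved automatically.

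\textbf{Forward compatibility.} Let $\conf{v} \trans{} \conf{v}'$ be witnessed by $D$ and suppose $\conf{v} \wqo \conf{w}$. Let $S = \supp{\conf{v}} \setminus \supp{\conf{v}'}$ denote the states that must be completely emptied, and for each $p \in S$ let $\gamma(p) = \conf{w}(p) - \conf{v}(p) \geq 0$ be the extra mass that must additionally leave $p$ when the step is applied to $\conf{w}$. The key combinatorial claim is that from every $p \in S$ there is a path in the $D$-graph (i.e., using only transitions $t$ with $D(t) \geq 1$) from $p$ to some node outside $S$. Indeed, if the subset $R \subseteq S$ of nodes having no such escape were nonempty, it would be closed under $D$-transitions, so $\sum_{r \in R}(\pre{D}(r) - \post{D}(r))$ would equal minus the $D$-inflow into $R$ from outside $R$, hence be nonpositive; but it also equals $\sum_{r \in R} \conf{v}(r) > 0$, a contradiction. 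For each $p \in S$, fix such a path $p = p_0 \to p_1 \to \cdots \to p_l$ with $p_0, \ldots, p_{l-1} \in S$ and $p_l \notin S$, and add $\gamma(p)$ to the multiplicity of every transition along it; let $D'$ be the result. Then $\supp{D'} = \supp{D}$, and a direct verification at each state (using that intermediate path nodes receive balanced added in- and out-flow) shows that applying $D'$ to $\conf{w}$ empties every $p \in S$, preserves or increases occupancy of every $p \in \supp{\conf{v}'}$, and leaves every state outside $\supp{\conf{v}'} \cup S$ empty, which yields $\conf{v}' \wqo \conf{w}'$.

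\textbf{Backward compatibility.} Let $\conf{v} \trans{} \conf{v}'$ be witnessed by $D$ and suppose $\conf{v}' \wqo \conf{w}'$. Set $\Delta = \conf{w}' - \conf{v}' \geq 0$; since $\supp{\conf{w}'} = \supp{\conf{v}'}$, we have $\supp{\Delta} \subseteq \supp{\conf{v}'}$. For each $p$ with $\Delta(p) > 0$, if $p \in \supp{\conf{v}}$ I place the $\Delta(p)$ additional processes directly into $p$ in $\conf{w}$ and leave $D$ unchanged at $p$; if $p \notin \supp{\conf{v}}$, then $p \in \supp{\conf{v}'} \setminus \supp{\conf{v}}$, so $\conf{v}'(p) > 0 = \conf{v}(p)$ forces the existence of some transition $s \to p$ in $D$, and this $s$ must lie in $\supp{\conf{v}}$ since $\pre{D}(s) \geq 1 \leq \conf{v}(s)$. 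I then place the $\Delta(p)$ additional processes at $s$ in $\conf{w}$ and add $\Delta(p)$ to the multiplicity of $s \to p$ in $D'$. By construction $\conf{w} \geq \conf{v}$ with $\supp{\conf{w}} = \supp{\conf{v}}$, giving $\conf{v} \wqo \conf{w}$, and $\supp{D'} = \supp{D}$; a case analysis over the four possibilities for $q \in Q$ (according to its membership in $\supp{\conf{v}}$ and $\supp{\conf{v}'}$) confirms that applying $D'$ to $\conf{w}$ reconstructs exactly $\conf{w}'$.

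\textbf{Main obstacle.} The principal difficulty lies in the forward direction: the routing must not create spurious occupancies at intermediate nodes or at states that should remain empty in $\conf{v}'$. This is resolved by the mass-balance argument above ensuring that every emptied state admits a $D$-path out of $S$, together with the observation that each interior node on a routing path receives equal added in- and out-flow so its occupancy status is unaffected by the augmentation.
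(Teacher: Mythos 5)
Your proof is correct and follows essentially the same strategy as the paper's: augment the witnessing multiset $D$ by increasing multiplicities of transitions already present in it, so that $\supp{D'}=\supp{D}$ and the augmented step remains a valid TCS step, with the surplus processes of the larger configuration rerouted so that exactly the right states are emptied. The one place where you do genuinely more work is the forward direction: your mass-balance argument and path-routing through $S=\supp{\conf{v}}\setminus\supp{\conf{v}'}$ guard against a cascading problem (extra mass pushed into a state that itself must be emptied) that in fact cannot arise. Since a step requires $\conf{v}(q)\ge\pre{D}(q)$ for all $q$, any state $q$ with $\post{D}(q)\ge 1$ satisfies $\conf{v}'(q)=\conf{v}(q)-\pre{D}(q)+\post{D}(q)\ge\post{D}(q)\ge 1$, so no state in $S$ receives any $D$-inflow; hence every escape path has length one, there are no interior nodes, and the paper's simpler fix (add $\conf{w}(q)-\conf{v}(q)$ copies of a single outgoing transition $q\to p$, whose target is automatically outside $S$) already suffices. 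Your length-one paths coincide with this, and the well-definedness condition $\conf{w}\ge\pre{D'}$ holds with equality on $S$, so nothing breaks. Your explicit backward construction, which the paper omits with ``similar,'' is also correct: placing the surplus $\Delta(p)$ either at $p$ itself or at a $D$-predecessor $s\in\supp{\conf{v}}$ keeps $\supp{\conf{w}}=\supp{\conf{v}}$ and reconstructs $\conf{w}'$ exactly.
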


\begin{proof}
Let $\csystem$ be a transition counter system given by $(Q, \delta, \Dmin)$.
To prove forward $\wqo$-compatibility, assume there is a step $\conf{v}\rightarrow \conf{v}'$ and $\conf{v} \wqo \conf{w}$.
There exists a multiset of transitions $D$ such that $ \conf{v}'$ is obtained by applying $D$ to $\conf{v}$.

For every state $q\in Q$, $\conf{v}(q) \le \conf{w}(q)$.
Therefore $D$ can be applied to $\conf{w}$.
Call $\conf{w''}$ the resulting configuration. 
We want to check $\conf{v'} \wqo \conf{w''}$ or modify the step from $\conf{w}$ to make it true.
This means we have to satisfy the following conditions:
\begin{enumerate}[topsep=1pt,parsep=1pt]
\item 
$\conf{w''}(q) \ge \conf{v'}(q)$ for all  $q\in Q$: the same transitions are taken from $\conf{w}\ge \conf{v}$, so this will hold.
\item
$\conf{w''}(q)=0 $ if and only if $\conf{v'}(q)=0$ for all  $q\in Q$: 
if there are no such states then we are done; otherwise
suppose $\conf{v'}(q)=0$ and $\conf{w''}(q)> 0$.
This entails $\conf{w}(q)> 0$ and thus also $\conf{v}(q)> 0$ by definition of $\wqo$.
This means state $q$ was emptied in the step $\conf{v} \rightarrow \conf{v}'$;
one of the transitions in $D$ is of the form $q \trans{} p$.
We call $D'$ the multiset of transitions obtained 
by adding $\conf{w}(q)-\conf{v}(q)$ iterations of $q \trans{} p$ to $D$, i.e.,  enough to empty $q$.
The configuration $\conf{w'}$ obtained by applying $D'$ to $\conf{w}$ is such that  $\conf{v'} \wqo \conf{w'}$. 
\end{enumerate}
Backward $\wqo$-compatibility can be proven in a similar way.
\qed
\end{proof}

TCSs are CSs, thus the definition of 01-CS carries over.
In particular, a step in the 01-CS exists if there exists a corresponding step in the TCS.
However, the 01-CS of a TCS can also be characterized in the following way. 

\begin{restatable}{lemma}{LmAbstractTCS}
\label{lm:abstract-system}
Let $\csystem$ be a TCS given by $(Q, \delta, \Dmin)$, 
and $\csystemabstract$ its 01-CS.
There is a step $\conf{v}^\alpha \trans{} \conf{w}^\alpha$ in $\csystemabstract$
if and only if there exists $D \in \Dmin$ such that 
$\supp{\pre D} \subseteq \supp{\conf{v}^\alpha}$ and 
$\conf{w}^\alpha$ is such that 
(a) $\conf{w}^\alpha(q)$ equals 0 or 1 if $q \in \supp{\pre D} \setminus \supp{\post D}$, 
(b) $\conf{w}^\alpha(q)$ equals 1 if $q \in \supp{\post D}$, and
(c) $\conf{w}^\alpha(q)$ equals $\conf{v}^\alpha(q)$ otherwise.
\end{restatable}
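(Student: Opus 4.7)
The plan is to prove both implications by unpacking definitions and performing a per-state case analysis, splitting each $q \in Q$ according to whether it lies in $\supp{\pre D}$, $\supp{\post D}$, both, or neither. The key observation driving the characterization is that, by \cref{def:transition-cs}, any step in $\csystem$ uses a multiset $D' \in \N^\delta$ sharing support with some $D \in \Dmin$, so the support-level information visible to $\alpha$ is already determined by $D$ alone, independently of the multiplicities in $D'$.

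For the forward direction, suppose $\conf{v}^\alpha \trans{} \conf{w}^\alpha$ in $\csystemabstract$. By definition of the $01$-CS this is witnessed by some concrete step $\conf{v} \trans{} \conf{v}'$ in $\csystem$ with matching abstractions, arising from a $D' \in \N^\delta$ whose support equals that of some $D \in \Dmin$. Since applying $D'$ to $\conf{v}$ requires $\supp{\pre{D'}} \subseteq \supp{\conf{v}}$ and since $\alpha$ preserves supports, the inclusion $\supp{\pre D} \subseteq \supp{\conf{v}^\alpha}$ follows. Writing $\conf{v}' = \conf{v} - \pre{D'} + \post{D'}$, conditions (b) and (c) then fall out immediately by inspecting which side of the difference contributes at $q$, and (a) is vacuous since $\conf{w}^\alpha$ is Boolean-valued.

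For the backward direction, given $D \in \Dmin$ and $\conf{w}^\alpha$ satisfying (a)--(c), I will construct a concrete witness by picking $D' := D$ (viewed in $\{0,1\}^\delta \subseteq \N^\delta$) together with a carefully tailored $\conf{v}$: zero exactly where $\conf{v}^\alpha$ is zero; equal to $\pre{D}(q)$ on each $q \in \supp{\pre D} \setminus \supp{\post D}$ with $\conf{w}^\alpha(q) = 0$, so that firing empties $q$ precisely; and strictly larger than $\pre{D}(q)$ on the remaining occupied states, so that $q$ stays occupied after firing. The hypothesis $\supp{\pre D} \subseteq \supp{\conf{v}^\alpha}$ ensures this assignment is consistent with $\alpha(\conf{v}) = \conf{v}^\alpha$. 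The main piece of bookkeeping, and the only subtle point, is case (a): one must simultaneously ensure that the step is enabled at $\conf{v}$, that $\conf{v}$ abstracts to $\conf{v}^\alpha$, and that $\conf{v}'(q)$ lands at $0$ or $1$ as prescribed by $\conf{w}^\alpha$; the remaining cases reduce to routine substitutions.
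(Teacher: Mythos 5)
Your proposal is correct and takes essentially the same approach as the paper: the forward direction unpacks the concrete witness step and uses that $\supp{\pre{D'}}=\supp{\pre{D}}$ and $\supp{\post{D'}}=\supp{\post{D}}$, and for the converse you handle the subtlety in case (a) exactly as the paper does, by tuning the multiplicity at each $q \in \supp{\pre D} \setminus \supp{\post D}$ to be exactly $\pre{D}(q)$ (to empty it) or strictly larger (to keep it occupied). The only cosmetic difference is that you build one tailored witness for the given $\conf{w}^\alpha$, whereas the paper enumerates the configurations $\conf{v}_1 + \sum_{q\in S}\conf{q}$ over all subsets $S$ of $\supp{\pre D}\setminus\supp{\post D}$ to realize every admissible target.
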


\begin{proof}
Let $\csystem= (Q, \delta, \Dmin)$ be a transition counter system,
$\csystemabstract$ its 01-CS and $\conf{v}^\alpha$ a configuration of $\csystemabstract$.
Suppose there exists $D \in \Dmin$ such that
$\supp{\pre D} \subseteq \supp{\conf{v}^\alpha}$.
Applying $D$ to any $\conf{v}$ of $\csystem$ such that $\alpha(\conf{v}) = \conf{v}^\alpha$
and with enough processes so that $D$ can be applied
always yields a $\conf{w}$ whose image by $\alpha$ verifies points $b)$ and $c)$.
The subtlety lies in point $a)$.

Let $\conf{v}_1$ be the minimal configuration of $\csystem$ such that
$\alpha(\conf{v}_1) = \conf{v}^\alpha$ and such that
$D$ can be applied to $\conf{v}_1$,
i.e., $\conf{v}_1(p)=\pre D (p)$ for all $p \in Q$.
Let $\conf{w}_1$ be the configuration obtained by applying $D$ to $\conf{v}_1$.
Then $\alpha(\conf{w}_1)=\conf{w}_1^\alpha$ is such that $a),b),c)$ are verified,
with $\conf{w}_1^\alpha(q)=0$ for all $q \in\supp{\pre D} \setminus \supp{\post D}$.
Step $\conf{v}_1 \trans{} \conf{w}_1$ implies
step $\conf{v}^\alpha \trans{} \conf{w}_1^\alpha$ in $\csystemabstract$.

Let $\conf{v}_2$ be the  configuration of $\csystem$ equal to
$\conf{v}_1+\conf{q_2}$ for a $q_2 \in \supp{\pre D} \setminus \supp{\post D}$.
It is still that case that $\alpha(\conf{v}_2) = \conf{v}^\alpha$ and that
$D$ can be applied to $\conf{v}_2$.
Let $\conf{w}_2$ be the configuration obtained by applying $D$ to $\conf{v}_2$.
Then $\alpha(\conf{w}_2)=\conf{w}_2^\alpha$ is such that $a),b),c)$ are verified,
with $\conf{w}_2^\alpha(q)=0$ for all $q \in\supp{\pre D} \setminus \supp{\post D}$
except for $q_2$, for which $\conf{w}_2^\alpha(q_2)=1$.
Step $\conf{v}_2 \trans{} \conf{w}_2$ implies
step $\conf{v}^\alpha \trans{} \conf{w}_2^\alpha$ in $\csystemabstract$.
We can repeat this proof idea for any configuration
$\conf{v}_1+\sum_{q \in S} \conf{q}$,
for any subset $S$ of $\supp{\pre D} \setminus \supp{\post D}$,
to obtain all the 0, 1 combinations for $\conf{w}$ to verify $a)$.

Now for the other direction,
there exists a step $\conf{v}^\alpha \trans{} \conf{w}^\alpha$ in $\csystemabstract$
if there exists a step $\conf{v} \trans{} \conf{w}$ in $\csystem$
for $\conf{v}$ such that $\alpha(\conf{v}) = \conf{v}^\alpha$ and
$\conf{w}$ such that $\alpha(\conf{w}) = \conf{w}^\alpha$.
By definition of a transition counter system,
there exists $D'\in \N^\delta$ and $D \in \Dmin$, such that
$\supp{D'}=\supp{D}$ and $\conf{w}$ is obtained by applying $D'$ to $\conf{v}$.
Multiset $D'$ is such that  $\supp{\pre{D'}} \subseteq \supp{\conf{v}^\alpha}$
since $D'$ can be applied to $\conf{v}$ and
$\alpha(\conf{v}) = \conf{v}^\alpha$ implies
$\supp{\conf{v}}=\supp{\conf{v}^\alpha}$.
Since $\supp{\pre{D'}}=\supp{\pre{D}}$,
we have $\supp{\pre{D}} \subseteq \supp{\conf{v}^\alpha}$.
It is clear that $\conf{w}$ obtained by applying $D'$ is such that
$\alpha(\conf{w})$ verifies the conditions $a),b),c)$.
\qed
\end{proof}

This lemma entails that one can check the existence of a step 
in the 01-CS of a TCS 
in polynomial time in the size of $Q$ and $\Dmin$.
This allows us to extend 
\cite{DelzannoSTZ12}'s results.

\begin{restatable}{theorem}{ThmReachGeq}
\label{thm:reach-geqa}
    Given a TCS, deciding CRP for $CC[\geq a]$ is \Polytime-complete.
\end{restatable}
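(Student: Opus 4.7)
The plan is to establish both membership in \Polytime{} and \Polytime{}-hardness separately, leveraging Lemmas~\ref{lm:reduce} and~\ref{lm:abstract-system}, together with the observation that constraints in $CC[\geq a]$ abstract to \emph{monotone} Boolean formulas over atoms ``state $q$ is covered''.

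For membership in \Polytime, I would first apply \cref{lm:reduce} to reduce the CRP for $\varphi \in CC[\geq a]$ on $\csystem$ to checking reachability in the 01-CS $\csystemabstract$ of a configuration $(c,\absconf{v})$ satisfying $\varphi_\alpha$, where $\varphi_\alpha$ only contains atoms of the form $\#q \geq 1$ (hence monotone in the support of $\absconf{v}$). I would then give the following saturation algorithm: initialize $R := Q_0$, and repeatedly look for some $D \in \Dmin$ with $\supp{\pre{D}} \subseteq R$ such that $\supp{\post{D}} \not\subseteq R$, and update $R := R \cup \supp{\post{D}}$; stop when no such $D$ exists. This terminates within $|Q|$ rounds, and each round examines each $D \in \Dmin$ in time polynomial in $|Q|$, so the whole procedure runs in polynomial time in the input size $(Q,\delta,\Dmin)$. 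Finally, evaluate the monotone formula $\varphi_\alpha$ under the assignment that marks exactly the states in $R$ as ``covered'' and accept iff it holds.

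The correctness hinges on showing that $R$ is precisely the largest support of any reachable configuration of $\csystemabstract$, and moreover that a single reachable configuration realizes this support $R$. The forward direction (every state in the support of some reachable 01-configuration belongs to $R$) is an easy induction on the length of abstract runs using \cref{lm:abstract-system}. The backward direction is the key point and relies on the ``copycat'' freedom granted by condition (a) of \cref{lm:abstract-system}: when applying a minimal step $D$, states in $\supp{\pre{D}} \setminus \supp{\post{D}}$ can be set to either $0$ or $1$, so we always choose $1$. Consequently, the support along the run can only grow, and we can mirror each saturation step by an abstract step whose new support strictly increases toward $R$. Thus there exists a reachable abstract configuration with support exactly $R$, and by monotonicity of $\varphi_\alpha$, this configuration satisfies $\varphi_\alpha$ whenever any reachable one does. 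The main subtlety in this step is ensuring that the ``kept at $1$'' choice can be exercised consistently along the run, which follows directly from the characterization of abstract steps in \cref{lm:abstract-system}.

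For \Polytime-hardness, I would reduce from the CRP for $CC[\geq 1]$ on reconfigurable broadcast networks without controller, which is \Polytime-complete by Delzanno et al.~\cite{DelzannoSTZ12}. By \cref{lm:transition-systems}(1) these systems are TCSs, and $CC[\geq 1] \subseteq CC[\geq a]$, so hardness transfers immediately. Combining both directions yields \Polytime-completeness.
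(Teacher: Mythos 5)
Your proposal is correct and follows essentially the same route as the paper: reduce to the $01$-CS via \cref{lm:reduce}, then compute the set of coverable states by a saturation that, thanks to condition (a) of \cref{lm:abstract-system}, never drops a state from the support and therefore stabilizes within $\card{Q}$ rounds, and finally evaluate the (monotone) $\varphi_\alpha$ on that maximal support; hardness is likewise inherited from $CC[\geq 1]$ on RBN. The paper phrases the saturation as an explicit run of the abstract system and leaves the monotonicity of $\varphi_\alpha$ implicit, but the algorithm and correctness argument are the same.
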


\begin{proof}[Sketch]
Let $\csystem$ be a TCS, $\csystemabstract$ its 01-CS and $\varphi \in CC[\geq a]$.
By \cref{lm:reduce}, the problem can be reduced to checking if 
there is a reachable configuration $\absconf{v}$ in $\csystemabstract$ that satisfies $\varphi_\alpha$. 
Consider the following algorithm: 
start a run in the initial configuration $\conf{v}_0^\alpha$ containing the maximum number of ones, 
i.e. $\conf{v}_0^\alpha(q)=1$ iff $q \in Q_0$.
By \cref{lm:abstract-system} it is possible to only take steps 
that do not decrease the set of states with ones.
This defines a maximal run $\conf{v}_0^\alpha \trans{} \ldots \trans{} \conf{v}_n^\alpha$
of length at most $|Q|$
such that $\conf{v}_n^\alpha(q)=1$ for all $q$ reachable in $\csystemabstract$.
It then suffices to check whether $\conf{v}_n^\alpha \models \varphi_\alpha$.
\Polytime-hardness follows from \Polytime-hardness of CRP for $CC[\geq 1]$ in RBN~\cite{DelzannoSTZ12}, which is a special case of this problem. The full proof can be found in \cref{app:small-tcs}.
\qed
\end{proof}

In the following, we write $\conf{v}^\alpha \trans{D} \conf{w}^\alpha$ for a step as defined in the end of the last proof. 

\begin{restatable}{theorem}{ThmReachGeqZero}
\label{thm:reach-geqa0}
    Given a  TCS, deciding CRP for $CC[\geq a, =0]$ is \NP-complete.
\end{restatable}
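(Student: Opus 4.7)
The plan is to prove the two directions separately. For NP-hardness, I would leverage the existing result that CRP for $CC[\geq 1, =0]$ is \NP-hard for RBN without controller~\cite{DelzannoSZ12}. By \cref{lm:transition-systems}(1), CSs without a controller and with only lossy broadcast steps are TCSs, and since $CC[\geq 1, =0] \subseteq CC[\geq a, =0]$, the hardness transfers directly to our setting.

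For NP membership, I would proceed via \cref{lm:reduce} to reduce the problem to reachability in $\csystemabstract$ of an abstract configuration $\absconf{v}^* \models \varphi_\alpha$. By \cref{lm:abstract-system}, each step in $\csystemabstract$ is characterized entirely by a minimal step $D \in \Dmin$ and a choice, for each $q \in \supp{\pre D} \setminus \supp{\post D}$, of whether $q$ is deactivated or not. Checking the applicability of such a step takes polynomial time. A nondeterministic polynomial-time algorithm then guesses a sequence of steps in $\csystemabstract$ and verifies them, provided we can show a polynomial upper bound on the witness length.

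The core claim to establish is: if $\absconf{v}^*$ is reachable in $\csystemabstract$, then it is reachable by a path of length polynomial in $|Q|$ and $|\Dmin|$. The strategy is to decompose any such witness path into two stages. First, a \emph{growing} stage that, as in \cref{thm:reach-geqa}, performs the greedy upward closure by using each $D \in \Dmin$ applicable under the current support and always choosing to keep states active, thereby reaching in at most $|Q|$ steps the unique maximal-support configuration $\absconf{v}_{\max}$. Second, a \emph{deactivation} stage that empties exactly the states in $\supp{\absconf{v}_{\max}} \setminus \supp{\absconf{v}^*}$. For each such state $q$, we guess a minimal step $D_q$ with $q \in \supp{\pre{D_q}} \setminus \supp{\post{D_q}}$ and choose to deactivate $q$ in that step; there are at most $|Q|$ such guesses.

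The main obstacle is the deactivation stage: applying $D_q$ may inadvertently deactivate other states or require preconditions violated by earlier deactivations. To handle this I would argue, using \cref{lm:abstract-system} and forward $\wqo$-compatibility (\cref{lm:tcs}), that whenever a needed state is lost, it can be re-activated by a short subsequence of steps from the growing stage restricted to the currently active support, adding only a polynomial number of repair steps. One must also argue that the order of deactivations can be chosen so that the preconditions $\supp{\pre{D_q}} \subseteq \supp{\absconf{v}}$ remain satisfiable throughout — this can be done by guessing the order nondeterministically and interleaving repairs, with a total length bounded polynomially in $|Q|\cdot|\Dmin|$. Combined with the polynomial-time verifiability of each step, this yields the \NP upper bound and, together with the matching hardness, the claimed \NP-completeness.
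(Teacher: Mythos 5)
Your overall architecture (reduce to $\csystemabstract$ via \cref{lm:reduce}, guess a polynomially bounded two-phase witness, get hardness from the RBN result) matches the paper, but the membership argument has a genuine flaw: the growing stage must \emph{not} be the full greedy upward closure to the unique maximal-support configuration $\absconf{v}_{\max}$. Overshooting can activate a state that can never again be deactivated, making the algorithm incomplete. Concretely, take $Q=\{p,q,r\}$, $Q_0=\{p\}$, $\Dmin=\{\{p\trans{}q\},\{p\trans{}r\}\}$, and target $\#q=0\wedge\#r\geq 1$. The run $(1,0,0)\trans{}(0,0,1)$ witnesses a yes-instance, but your growing stage reaches $(1,1,1)$, and since $q$ has no outgoing transition in any minimal step, no deactivation (and no amount of ``repair'') can ever empty $q$ afterwards. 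This is precisely why the paper's algorithm nondeterministically guesses the initial configuration (not necessarily with full support on $Q_0$) and explicitly allows the increasing prefix to \emph{stop early} even when support-increasing steps remain; the increasing phase must be a guessed, possibly partial, ascent rather than a canonical saturation.

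The second gap is in the deactivation stage. Your proposal to interleave ``repair'' steps that re-activate lost preconditions breaks the monotone-decreasing structure, risks re-filling states that the constraint requires to be $0$, and carries no argument against cycling (deactivating $q$ needs $p$ active, re-activating $p$ may re-fill $q$, \ldots), so the claimed polynomial bound on the witness length is unsupported. The paper avoids all of this with a strictly two-phase run and proves its existence by transforming an arbitrary witness run $\conf{w}_0^\alpha\trans{}\cdots\trans{}\conf{w}_l^\alpha$: first make it monotone increasing by always keeping ones, reaching a configuration dominating every $\conf{w}_i^\alpha$; then empty the surplus states by \emph{replaying} minimal steps taken from the original run, ordered by the last index at which each to-be-emptied state was filled. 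That ordering guarantees each replayed step is enabled and that emptied states are never refilled, yielding a decreasing suffix of length at most $|Q|$ with no repairs needed. This use of the original run as a guide for which steps to replay, and in which order, is the key idea your proposal is missing.
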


\begin{proof}[Sketch]
Let $\csystem$ be a TCS, $\csystemabstract$ its 01-CS and $\varphi \in CC[\geq a, =0]$.
By \cref{lm:reduce}, it suffices to check if there is a $\absconf{v} \models \varphi_\alpha$  
initially reachable in $\csystemabstract$. 
Consider the following (informal) non-deterministic algorithm:
we guess a run $\conf{v}_0^\alpha \trans{} \ldots \trans{} \conf{v}_m^\alpha$ in two parts, 
first guessing a prefix  that increases the set of states with ones, 
 then guessing a suffix that decreases the set of states with ones.
 It then suffices to check whether $\conf{v}_m^\alpha \models \varphi_\alpha$, and the run is of length at most $2|Q|$. 
\NP-hardness follows from \NP-hardness of CRP for $CC[\geq 1,=0]$ in RBN~\cite{DelzannoSTZ12}, which is a special case of this problem.
The full proof can be found in \cref{app:small-tcs}.
\qed
\end{proof}

\begin{remark}
Given a CS, the \emph{deadlock} problem asks 
whether there is a reachable configuration from which no further step can be taken.
In a TCS $\csystem$ where for all minimal steps $D \in \Dmin$ 
there is at most one transition starting in each state, i.e.,
$\pre{D}(p)\le 1, \forall p\in Q$, 
the deadlock problem is solvable in the abstract system $\csystemabstract$.
Indeed, it can be expressed as a CRP problem with cardinality constraint
$\bigwedge_{D \in \Dmin} \bigvee_{q \in \pre{D}} \#q =0$. 
\end{remark}
\cref{tab:CRP} summarizes our results on the CRP and compares them to existing results.
\begin{table*}
\caption{Decidability and Complexity of the Constraint Reachability Problem (CRP)} 
\label{tab:CRP}
\resizebox{\textwidth}{!}{
\begin{tabular}{l|l|l||l|l|l}
\toprule
\multicolumn{3}{c}{Our Results} & \multicolumn{3}{c}{Existing Results}\\
System Class & Constraint Class & Result & System Class & Constraint Class & Results\\
\midrule
\textbf{$\mathbf{\wqo}$-compatible} & $\mathbf{CC[\control,\geq a, =0]}$ &  \PSPACE-complete &  ASM & $CC[\control]$ & co-NP-complete~\cite{EsparzaGM16}\\
\textbf{systems} & & (\cref{thm:CRP-PSPACE-complete})&  disjunctive & $CC[\control,\geq a]$ & decidable~\cite{EmersonK03}\\
& & & & $CC[\control,\geq a,=0]$ & in EXPTIME~\cite{JacobsSV22}\\
\midrule
\textbf{TCS} & $CC[\geq a]$ & \Polytime-complete & RBN & $CC[\geq 1]$ & \Polytime-complete~\cite{DelzannoSTZ12}\\
& & (\cref{thm:reach-geqa}) & disjunctive & $CC[\geq 1]$ & in \Polytime~\cite{Gandalf21}\\

\midrule
\textbf{TCS} & $CC[\geq a,=0]$ & \NP-complete & RBN & $CC[\geq 1,=0]$ & \NP-complete~\cite{DelzannoSTZ12}\\
& & (\cref{thm:reach-geqa0}) & disjunctive & $CC[\geq 1,=0]$ & in \NP~\cite{Gandalf21}\\

\bottomrule
\end{tabular}
}\vskip-0.2cm
\end{table*}

\section{Conclusion}

In this paper, we characterized parameterized systems for which $(0,1)$-counter abstraction is precise, i.e., a safety property holds in the parameterized system if and only if it holds in its $01$-counter system.
Several system models from the literature fall into this class, including reconfigurable broadcast networks, \disjunctive{}s, and asynchronous shared memory protocols.
Our common framework for these systems provides a simpler explanation for existing decidability results, and also extends and improves them.
We prove that the constraint reachability problem for the whole class of systems is \PSPACE-complete (even without a controller process), and that lower complexity bounds can be obtained under additional assumptions.

Note that weaker versions of \cref{lem:reachability,lm:reduce} directly follow from the fact that $(\csystem,\wqo)$ is a well-structured transition system (cf.~\cite{AbdullaCJT00,FinkelS01}):
in these systems, infinite upward-closed sets (as defined by a constraint in $CC[\geq a,=0]$) can be represented by a finite basis wrt. $\wqo$, resulting in a parameterized model checking algorithm with guaranteed termination. 
However, the complexity bound of the general algorithm is huge (e.g., for broadcast protocols~\cite{EsparzaFM99} it has Ackermannian complexity~\cite{SchmitzS13}).
Instead of relying only on this, we introduce a novel argument that directly connects $\wqo$-compatible systems to the $01$-counter system.

In addition to the questions it answers, we think that this work also raises lots of interesting questions, and that it can serve as the basis of a more systematic study of the systems covered in our framework: 
while in this paper we have focused on reachability and safety properties, we conjecture that our framework can also be extended to liveness and termination properties, possibly under additional restrictions on the systems.
Moreover, an extension to more powerful system models may be possible, for example to processes that are timed automata (like in~\cite{AndreEJK24}) or pushdown automata (like in~\cite{EsparzaGM16}).

\begin{credits}
	\subsubsection{\ackname}
	We thank Javier Esparza and Pierre Ganty for many helpful discussions at the start of this paper.
	P. Eichler carried out this work as a member of the Saarbr\"ucken Graduate School of Computer Science.
	This research was funded in part by the German Research Foundation (DFG) grant GSP\&Co (No. 497132954).
	C. Weil-Kennedy's work was supported by the grant PID2022-138072OB-I00, funded by
	MCIN, FEDER, UE and partially supported by PRODIGY Project (TED2021-132464B-I00) funded
	by MCIN and the European Union NextGeneration.
\end{credits}

\newpage
\bibliographystyle{splncs04}
\bibliography{main}

\newpage
\appendix

\section{Additional Examples for Section \ref{sec:prelim}}\label{sec:additional-example-steps}

This section provides additional examples for some of the transitions introduced in \cref{sec:transition-types}.

\begin{figure}[!ht]
    \centering
    \begin{tikzpicture}[font=\footnotesize]
        \node[location,initial] (a1) {$c_1$};
        \node[location] (a2) [right =of a1] {$c_2$};

        \path[->]
        (a1) edge[bend left = 40] node[above,locguard] {$q_1$} (a2)
        (a2) edge[bend left = 40] node[below,locguard] {$q_2$} (a1)
        (a2) edge[loop above] node[above] {} (a2)
        ;
    \end{tikzpicture}
    \hspace{3em}
    \begin{tikzpicture}[font=\footnotesize]
        \node[location,initial] (a1) {$q_1$};
        \node[location] (a2) [right =of a1] {$q_2$};
        \node[location] (a3) [right =of a2] {$q_3$};

        \path[->]
        (a1) edge[bend left = 40] node[above,locguard] {$c_2$} (a2)
        (a2) edge[bend left = 40] node[below] {} (a1)
        (a2) edge node[above,locguard] {$q_2$} (a3)
        (a1) edge[loop above] node[above] {} (a1)
        (a2) edge[loop above] node[above] {} (a2)
        ;
    \end{tikzpicture}
    \caption{A \disjunctive{} with two controller states and three user states.}
    \label{fig:disj}
\end{figure}

\begin{example}
    \cref{fig:disj} depicts a disjunctive guard protocol.
    From initial configuration $(c_1,\conf{v})$ with $\conf{v}=(1,0,0)$, there is a step to
    $(c_2,\conf{v})$: the controller takes transition $c_1 \trans{q_1} c_2$.
    Note that from $(c_1,\conf{v})$, the user process in $q_1$ cannot take transition $q_1 \trans{c_2} q_2$,
    as this would require the controller to be in $c_2$.
\end{example}

\begin{figure}[!ht]
    \centering
    \begin{tikzpicture}[font=\footnotesize]
        \node[location,initial] (a1) {$00$};
        \node[location] (a2) [right =of a1] {$01$};
        \node[location] (a3) [below =of a1] {$10$};
        \node[location] (a4) [right =of a3] {$11$};

        \path[<->]
        (a1) edge[bend left = 40] node[above] {} (a2)
        (a1) edge[bend right = 40] node[above] {} (a3)
        (a1) edge node[above] {} (a4)
        (a2) edge[bend left = 40] node[above] {} (a4)
        (a2) edge node[above] {} (a3)
        (a3) edge[bend right = 40] node[above] {} (a4)
        ;
    \end{tikzpicture}
    \begin{tikzpicture}[font=\footnotesize]
        \node[location,initial] (a1) {$q_1$};
        \node (a) [right =of a1] {};
        \node[location] (a2) [right =of a] {$q_2$};
        \node[location] (a3) [right =of a2] {$q_3$};

        \path[->]
        (a1) edge[bend left = 40] node[above] {$w(10)$} (a2)
        (a2) edge[bend left = 40] node[below] {$r(11)$} (a1)
        (a2) edge node[above] {$r(01)$} (a3)
        (a1) edge[loop above] node[above] {$w(01)$} (a1)
        (a2) edge[loop above] node[above] {$w(11)$} (a2)
        ;
    \end{tikzpicture}

    \caption{An ASM system with four controller states (i.e., variable valuations) and three user states.}
    \label{fig:shared}
\end{figure}

\begin{example}
    \cref{fig:shared} depicts an ASM protocol.
    Starting in initial configuration $(00,(2,0,0))$, there is a run
    $$(00,(2,0,0)) \trans{w(10)} (10,(1,1,0)) \trans{w(01)} (01,(1,1,0)) \trans{r(01)} (01,(1,0,1)).$$
    Notice that from the initial configuration $(00,(1,0,0))$, state $q_3$ is not reachable.
\end{example}

\section{Additional Details for Section \ref{sec:basic}}\label{sec:appendix-sec-compat}

This section provides  the omitted $\wqo$ compatibility proofs from \cref{sec:basic}.

\LmCompatibleAll*

We split the proof of \cref{lm:compatible-all} into several lemmas depending on the step type.
Recall that the following result is already proved in the main text.

\begin{lemma}
\label{lm:compatible-rbn}
    CSs induced by lossy broadcasts are fully $\wqo$-compatible.
\end{lemma}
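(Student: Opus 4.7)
The plan is to prove forward and backward $\wqo$-compatibility separately. In both cases, my approach is to literally replay the broadcast and receive transitions of the given step in the larger configuration, and then repair the result using the defining ``lossiness'' of broadcasts to restore the support-equality clause of $\wqo$.

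For forward compatibility, I would start with a step $(c,\conf{v}) \trans{} (c',\conf{v}')$ that, by definition, is built from some broadcast transition $p_0 \trans{!a} p_0'$ taken by $j \geq 1$ processes together with receive transitions $p_i \trans{?a} p_i'$ taken by further processes. Given $(c,\conf{v}) \wqo (d,\conf{w})$, the controller states agree ($c=d$) and all user counts satisfy $\conf{w}(p) \geq \conf{v}(p)$, so the very same multiset of transitions is enabled in $(d,\conf{w})$; call the result $(d'',\conf{w}'')$. I would then verify the three clauses of $\wqo$ on the pair $(c',\conf{v}') \wqo (d'',\conf{w}'')$: (i) matching controller states follows by a small case split on whether the moving process is the controller or a user; (ii) componentwise dominance $\conf{w}''(q) \geq \conf{v}'(q)$ is inherited from $\conf{w} \geq \conf{v}$ since the same transitions are applied; (iii) support equality is the only delicate clause.

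The support mismatch can only happen when some state $q$ is emptied in $\conf{v}'$ but not in $\conf{w}''$. In that case, the step must contain a transition of the form $q \trans{\star a} r$ with $\star \in \{!,?\}$ that removed all $\conf{v}(q)$ processes. Here is where lossiness pays off: I can add exactly $\conf{w}(q) - \conf{v}(q)$ extra copies of a receive $q \trans{?a} r$ (or of the broadcast if $\star = !$) to the step taken from $(d,\conf{w})$, which empties $q$ without affecting clauses (i) and (ii). I would iterate this local fix over every offending $q$ until the supports agree.

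Backward compatibility proceeds symmetrically: given a step $(c,\conf{v}) \trans{} (c',\conf{v}')$ and a target $(c',\conf{v}') \wqo (d',\conf{w}')$, I would reverse-engineer a source $(d,\conf{w})$ by pre-composing the same transitions, then use the same lossy-padding trick to equalize the supports on the source side. The main obstacle I anticipate is the bookkeeping for clause (iii) when several states are emptied or filled at once; the per-state iterative patch above resolves this uniformly, which is why I expect the proof to go through with essentially the same structure in both directions.
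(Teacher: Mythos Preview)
Your proposal is correct and follows essentially the same approach as the paper: replay the same multiset of broadcast/receive transitions in the larger configuration, verify the three clauses of $\wqo$ (controller state, componentwise dominance, support equality), and repair any support mismatch by padding with extra copies of an emptying transition $q \trans{\star a} r$. The paper likewise defers backward compatibility to the symmetric argument, so your sketch matches it in both structure and detail.
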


We show it also holds  for disjunctive guard steps, synchronization steps and ASM steps.

\begin{restatable}{lemma}{LmCompatibleDG}
    \label{lm:compatible-dg}
        CSs induced by disjunctive guard steps
        are fully $\wqo$-compatible.
    \end{restatable}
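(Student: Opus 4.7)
My plan is to follow the exact template of the lossy broadcast proof (\cref{lm:compatible-rbn}), since a disjunctive guard step moves one or more processes along a single transition $p \trans{\existsGuard} q$, which structurally resembles a broadcast step with zero receivers. In both directions the controller-move case ($p,q \in C$) is immediate: from $\wqo$ we get $c=d$, no user counts change, and the guard $\existsGuard$ is satisfied on both sides because $\wqo$ preserves supports; so I concentrate on user moves ($p,q \in Q$). The self-loop subcase $p=q$ is also trivial since the user multiset is unchanged.

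For forward compatibility, assume $i \geq 1$ processes take $p \trans{\existsGuard} q$ from $(c,\conf{v})$ and $(c,\conf{v}) \wqo (d,\conf{w})$. Since $c=d$ and $\supp{\conf{v}} = \supp{\conf{w}}$, the guard is enabled in $(d,\conf{w})$. I will move $j$ processes from $p$ to $q$ in $(d,\conf{w})$, picking $j = i$ when $\conf{v}'(p) > 0$ and $j = \conf{w}(p)$ when $\conf{v}'(p) = 0$ (which gives $j \geq 1$ since $\conf{w}(p) \geq \conf{v}(p) \geq i$). This forces the support of the resulting $(d,\conf{w}')$ to agree with that of $(c,\conf{v}')$ at state $p$, and at state $q$ the counts only grow so the supports agree automatically. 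For backward compatibility, given $(c',\conf{v}') \wqo (d',\conf{w}')$, I carry out the symmetric construction: move $j$ processes from $q$ back to $p$ in $(d',\conf{w}')$ to obtain $(d,\conf{w})$, choosing $j = i$ when $\conf{v}(q) > 0$ and $j = \conf{w}'(q)$ when $\conf{v}(q) = 0$ (in which case $\conf{w}'(q) \geq \conf{v}'(q) = i \geq 1$).

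The only real obstacle is preserving the support condition $\conf{v}(r)=0 \iff \conf{w}(r)=0$ of $\wqo$ at the two affected states $p$ and $q$; at every other state the count is untouched and the condition transfers for free. The choices of $j$ above are designed precisely for this, and a routine verification then shows $(c',\conf{v}') \wqo (d',\conf{w}')$ (forward) and $(c,\conf{v}) \wqo (d,\conf{w})$ (backward). Validity of the constructed step is automatic: $(c,\conf{v})$ and $(c',\conf{v}')$ satisfy $\existsGuard$ by hypothesis, and since the $\wqo$-relation I have just verified preserves supports and controller states, so do $(d,\conf{w})$ and $(d',\conf{w}')$.
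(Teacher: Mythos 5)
Your proposal is correct and follows essentially the same route as the paper's proof: the same case split on whether $\conf{v}'(p)>0$ (move the same number of processes) or $\conf{v}'(p)=0$ (move all processes out of $p$), with the symmetric construction for backward compatibility that the paper leaves implicit. Your explicit check that the target configuration still satisfies $\existsGuard$ (via support preservation under $\wqo$) is a detail the paper glosses over, but it is the same argument.
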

    \begin{proof}
    To prove forward $\wqo$-compatibility,
    assume there is a step $(c,\conf{v}) \rightarrow (c',\conf{v}')$ and $(c,\conf{v}) \wqo (d,\conf{w})$.
    This step is made up of $k\ge 1$ processes taking a  transition $p \trans{\existsGuard} q$, and is only enabled if at least one process is in $\existsGuard$ in $(c,\conf{v})$.
    Since $(c,\conf{v}) \le (d,\conf{w})$, the guard $\existsGuard$ is also satisfied in $(d,\conf{w})$ and the transition can be taken.
    Distinguish two cases:
    \begin{enumerate}
        \item if $\conf{v}'(p)>0$, then after taking the same $k$ transitions from $(d,\conf{w})$
        we arrive in a configuration $(d',\conf{w}')$ with $(c',\conf{v}') \wqo (d',\conf{w}')$.
        \item if $\conf{v}'(p)=0$,
        then we let all processes that are in $p$ in $(d,\conf{w})$ take the transition $p \trans{\existsGuard} q$, and we reach a configuration $(d',\conf{w}')$ with $\conf{w}'(p')=0$, and which also satisfies $(c',\conf{v}') \wqo (d',\conf{w}')$.
    \end{enumerate}
    
    Backward $\wqo$-compatibility can be proven in a similar way.
    \qed
\end{proof}

\begin{restatable}{lemma}{LmCompatibleSync}
    \label{lm:compatible-synchronizations}
        CSs induced by synchronization steps are fully $\wqo$-compatible. 
    \end{restatable}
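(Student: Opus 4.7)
My plan is to mimic the structure of the lossy broadcast proof in \cref{lm:compatible-rbn}, using a copycat construction that preserves the set of occupied states.

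For forward $\wqo$-compatibility, suppose $(c,\conf{v}) \trans{a} (c',\conf{v}')$ is a synchronization step on label $a$ and $(c,\conf{v}) \wqo (d,\conf{w})$; by definition of $\wqo$, this means $c=d$, $\conf{v}(q) \le \conf{w}(q)$ for every $q$, and $\supp{\conf{v}} = \supp{\conf{w}}$. I would replay the same step from $(d,\conf{w})$: the controller mimics its own transition (or non-move, if $c$ has no $a$-transition), and for each user state $r$ with $\conf{v}(r) > 0$ the $\conf{v}(r)$ original processes used some distribution over the available $a$-destinations from $r$ (or all stayed, if $r$ has no $a$-transition). I would have all $\conf{w}(r)$ processes of $(d,\conf{w})$ take the same multiset of destinations, assigning each of the $\conf{w}(r) - \conf{v}(r)$ extras to any single destination already used by an original process. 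The resulting configuration $(d',\conf{w}')$ satisfies $(c',\conf{v}') \wqo (d',\conf{w}')$, because the set of destinations occupied is exactly the same (so post-support is preserved), while monotonicity on counts is immediate.

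For backward $\wqo$-compatibility, suppose $(c,\conf{v}) \trans{a} (c',\conf{v}')$ and $(c',\conf{v}') \wqo (d',\conf{w}')$. I would set $d = c$ and build $\conf{w}$ by backward-tracing ancestors: each of the $\conf{v}'(s)$ processes in state $s$ has a known ancestor from the original step, and for each of the $\conf{w}'(s) - \conf{v}'(s)$ extras in $s$ I would pick a state $r_s$ from which the original step sent at least one process to $s$. Such an $r_s$ exists because $\supp{\conf{v}'} = \supp{\conf{w}'}$ forces $\conf{v}'(s) \ge 1$. Adding one process to $\conf{w}$ at $r_s$ per extra, and letting that extra imitate its chosen ancestor's $a$-transition to $s$ (or stay, in case $r_s = s$ has no $a$-transition), the step from $(d,\conf{w})$ reaches exactly $(d',\conf{w}')$. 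Since every $r_s$ lies in $\supp{\conf{v}}$, the configuration $\conf{w}$ has the same support as $\conf{v}$, so $(c,\conf{v}) \wqo (d,\conf{w})$.

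The main subtlety is that synchronization is not a ``broadcast with optional receivers'' but forces every process in a state with an available $a$-transition to take one; one must check that the extras never end up in a state where they would be required to move but cannot. By placing extras only in states that were already used in the original step and assigning them transitions that were already taken there, the construction automatically respects this constraint. A minor case analysis is needed when $r$ has both an $a$-self-loop and other outgoing $a$-transitions, but copying the original choice handles it uniformly.
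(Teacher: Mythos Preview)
Your proof is correct and follows essentially the same copycat approach as the paper: replay the synchronization on the same label and route the surplus processes along transitions already used in the original step, so that the support of the resulting configuration is preserved. Your treatment is in fact more explicit than the paper's, which only spells out the case where a state must be emptied in the forward direction and dismisses backward compatibility as ``similar''; your detailed backward construction (choosing an ancestor $r_s$ for each surplus process) is exactly what that ``similar'' argument amounts to.
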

    
    \begin{proof}
    To prove forward $\wqo$-compatibility,
    assume there is a step $(c,\conf{v}) \rightarrow (c',\conf{v}')$ and $(c,\conf{v}) \wqo (d,\conf{w})$.
    This step is made up of a letter $a$ and all processes taking a transition of the form $p \trans{a} p'$ if they have one.
    Note that a step is not fully defined by the chosen letter $a$:
    states may have more than one outgoing $a$-transition for the processes to choose from.
    We arrive in a configuration $(d',\conf{w}')$ with $(c',\conf{v}') \wqo (d',\conf{w}')$
    by taking an $a$-step from $(d,\conf{w})$ as follows.
    If $k$ processes take some $p \trans{a} p'$ from $(c,\conf{v})$,
    then $k$ processes also take it from $(d,\conf{w})$.
    If $\conf{v}'(q)=0$ and $\conf{v}(q)\ge 0$ for some state $q$,
    then there is at least one transition from $q$ that is taken in $(c,\conf{v}) \rightarrow (c',\conf{v}')$;
    take it an extra $\conf{w}(q)-\conf{v}(q)$ times to empty $q$ from $(d,\conf{w})$.
    
    Backward $\wqo$-compatibility can be proven in a similar way.
    \qed
\end{proof}

\begin{restatable}{lemma}{LmCompatibleSFD}
    \label{lm:compatible-sfd}
        CSs induced by ASM steps are fully $\wqo$-compatible.
    \end{restatable}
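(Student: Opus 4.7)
The plan is to follow exactly the template established for lossy broadcast in \cref{lm:compatible-rbn}, adapted to the two kinds of ASM steps (write and read). The two features that make that template work carry over unchanged: the controller's contribution to an ASM step depends only on the current controller state (and, in a write step, on the label $a$), never on the multiplicities of user processes; and a given user transition $p\trans{l}q$ may be taken by any positive number of processes provided $p$ is non-empty. I would treat the write and read cases in parallel, since they differ only in whether the controller moves $c\trans{}a$ or stays put via $a\trans{}a$.

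For forward $\wqo$-compatibility, assume $(c,\conf{v}) \rightarrow (c',\conf{v}')$ is an ASM step in which $j \geq 1$ user processes take transitions $p_1 \trans{l_1} q_1, \ldots, p_j \trans{l_j} q_j$ (all $l_i = w(a)$ or all $l_i = r(a)$) together with the corresponding controller transition, and suppose $(c,\conf{v}) \wqo (d,\conf{w})$, so $d=c$. Because the controller component of the step is enabled whenever $c=d$ (for a read step this already forces $c = d = a$), the exact same multiset of transitions can be fired from $(d,\conf{w})$, producing some $(c',\conf{w}'')$ with $\conf{w}'' \geq \conf{v}'$. To restore the support condition, for every user state $q$ with $\conf{v}(q) > 0$ and $\conf{v}'(q) = 0$ we add $\conf{w}(q) - \conf{v}(q)$ extra copies of the transition $q \trans{l} p$ that was already used to drain $q$ in the original step, thereby emptying $q$ in the resulting configuration too. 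The three bullet points listed in the lossy broadcast proof (controller target, pointwise domination, support equality) then hold verbatim.

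For backward $\wqo$-compatibility, assume $(c,\conf{v}) \rightarrow (c',\conf{v}')$ together with $(c',\conf{v}') \wqo (d',\conf{w}')$, so $d'=c'$ and $\conf{w}' \geq \conf{v}'$ with $\supp{\conf{w}'} = \supp{\conf{v}'}$. Set $d = c$ and build $\conf{w}$ by pushing the excess $\Delta := \conf{w}' - \conf{v}' \geq 0$ backwards through the transition multiset $T$ of the original step: for each state $q$ with $\Delta_q > 0$, we know $q \in \supp{\conf{v}'}$; if $q \in \supp{\conf{v}}$ as well, simply add $\Delta_q$ processes in $q$ to $\conf{w}$ (they stay put during the step), and otherwise $q$ must occur as a target of some $p \trans{l} q \in T$, so we add $\Delta_q$ processes in $p$ to $\conf{w}$ and append $\Delta_q$ extra firings of $p \trans{l} q$ to the step. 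The construction preserves $\supp{\conf{w}} = \supp{\conf{v}}$, and the augmented step sends $(c,\conf{w})$ exactly to $(c',\conf{w}')$.

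The main subtlety, and the one place I expect to be careful, is the preservation of support equality on both sides of the step simultaneously. In the forward direction it forces us to ride on the \emph{existing} draining transitions rather than inventing new ones; in the backward direction it forces us to route extra mass either onto already-occupied states or through pre-images under $T$ so that no fresh state appears in $\conf{w}$. In both directions this is possible because the transitions needed are witnessed by the original step itself, so the argument closes just as for lossy broadcast, with the only ASM-specific check being the trivial one that $c=d$ suffices to replay the controller's read or write move.
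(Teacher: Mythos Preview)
Your proposal is correct and follows essentially the same template as the paper's proof: replay the step from the $\wqo$-larger configuration, then add extra firings of transitions already present in the step to restore the support-equality condition. The paper's own argument is actually simpler because it uses that an ASM step involves a \emph{single} user transition $p\trans{l(a)}p'$ taken by one or more processes (not several distinct $p_i\trans{l_i}q_i$ as you allow), so only the one source state $p$ can possibly be drained; your more general framing still works, but is not needed for ASM as defined. Your explicit treatment of the backward direction (routing the excess $\Delta$ either onto already-occupied states or through pre-images in $T$) is more detailed than the paper, which merely says ``can be proven in a similar way.''
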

    \begin{proof}
    To prove forward $\wqo$-compatibility,
    assume there is a step $(c,\conf{v}) \rightarrow (c',\conf{v}')$ and $(c,\conf{v}) \wqo (d,\conf{w})$.
    This step is made up of one process taking a  transition $p \trans{l(a)} p'$ with $l\in \set{w,r}$.
    Since $(c,\conf{v}) \le (d,\conf{w})$, this process is also in $(d,\conf{w})$.
    Distinguish two cases:
    \begin{enumerate}
        \item if $\conf{v}'(p)>0$, then after taking the same transition once from $(d,\conf{w})$ we arrive in a configuration $(d',\conf{w}')$ with $(c',\conf{v}') \wqo (d',\conf{w}')$.
        \item if $\conf{v}'(p)=0$, then we take the transition repeatedly until we reach a configuration $(d',\conf{w}')$ with $\conf{w}'(p')=0$:
        either it is a write transition and rewriting the same symbol does not change the shared variable (the state of the controller),
        or it is a read transition and the shared variable can be read again and again.
        This configuration  satisfies $(c',\conf{v}') \wqo (d',\conf{w}')$.
    \end{enumerate}
    Backward $\wqo$-compatibility can be proven in a similar way.
    \qed
\end{proof}

\section{Additional Details for \cref*{sec:reach}}\label{app:reach}

\subsection{CRP}

\begin{example} \label{ex:crp-problems}
    We give some examples of parameterized reachability problems expressed in CRP format.
    \begin{itemize}[topsep=2pt,parsep=1pt]
    \item The \emph{cover} problem (as in \cite{DelzannoSZ10,GuillouSS23,Waldburger23}) asks, given a counter system and a state $q_f \in Q$,
    whether a configuration with a least one process in $q_f$ is reachable.
    This can be expressed as a CRP with cardinality constraint $\#q_f \geq 1$. 
    This problem is also sometimes called control state reachability \cite{DelzannoSTZ12}
    (the ``control state'' is $q_f$ in this case, not to be confused with the state of the controller process).
    
    \item A variant of the cover problem can also be stated with respect to a state $c_f$ of the controller process.
    This can be expressed as a CRP with cardinality constraint $\control = c_f$.
    
    \item The \emph{coverability} problem (in the classic Petri nets sense) asks, given a counter system and a configuration  $(c,\conf{v})$,
    whether a configuration $(c,\conf{w})$ with $\conf{w} \ge \conf{v}$ is reachable.
    This corresponds to a CRP with cardinality constraint
    $\bigwedge_{q \in Q} \#q \geq \conf{v}(q)$.
    
    \item The \emph{target} problem  (as in \cite{DelzannoSZ10,BertrandFS15,Waldburger23}) asks, given a counter system with a distinguished state $q_f$,
    whether a configuration with all processes in $q_f$ is reachable.
    This can be expressed as a CRP with cardinality constraint
    $\bigwedge_{q \neq q_f} \#q = 0$.
    
    \end{itemize}
\end{example}

The constant $B_\csystem$ is usually small in counter systems. 

\begin{remark}
We provide more details on $B_\csystem$ for different types of transitions.
\label{rmk:bc}
\begin{itemize}[topsep=2pt,parsep=1pt]
\item
    Let $\csystem$ be a counter system with only lossy broadcast steps.
    Then $B_\csystem$ is bounded by $|Q|$:
    a step depends on one broadcast transition and an arbitrary number of receive transitions.
    In the worst case, a minimal step is such that, for a given state $p$, the broadcast is $p \trans{!a} p'$
    and there are receive transitions $p \trans{?a} q$ for every $q \in Q \setminus \{ p' \}$.
\item Let $\csystem$ be a counter system with only disjunctive guard steps.
    Then $B_\csystem$ is bounded by $2$:
    a step depends on one process that can take the transition, and one process that satisfies the guard. In the worst case both of them are user processes in the same state.
\item Let $\csystem$ be a counter system with only synchronization steps.
    Then $B_\csystem$ is bounded by $|Q|$:
    a step depends on a subset of the transitions labeled by the same letter.
    In the worst case there are $|Q|$ $a$-labeled transitions leaving from the same state.
\item Let $\csystem$ be a counter system with only ASM steps.
    Then $B_\csystem$ is bounded by $1$:
    a step depends on one controller transition and one user transition.
    \item For a CS with several types of these steps, 
    $B_\csystem$ is bounded by the maximum of the constants given above.
\end{itemize}
\end{remark}

\subsection{\PSPACE-Hardness}\label{app:hardness}

We prove \PSPACE-hardness by a reduction of the intersection non-emptiness problem for deterministic finite automata (DFA) to the CRP. 
Let $\autom_i = (\automStates_i, \automAlphabet,$
$ \automTrans_i, \automInitState_i, \automAcceptStates_i)$ for $i \in \{ 1, \dots, n \}$ be a set of $n$ DFA with common input alphabet $\automAlphabet$.
The intersection non-emptiness problem (INT) asks whether there exists a word $\word \in \automAlphabet^*$ that is accepted by all $n$ automata.
INT is known to be PSPACE-complete~\cite{conf/focs/Kozen77}.

We can directly encode INT into a CRP by considering (arbitrarily many copies of) the automata as communicating with synchronization actions with labels $\automAlphabet$. 
That is, we consider the CS $\csystem_{\autom_1, \dots ,\autom_n} = (\Loc, \Transitions)$ (without a controller), with $\Loc = \automStates_1 \cup \dots \cup \automStates_n$, initial states $\{ \automInitState_1, \dots, \automInitState_n \}$ and $\Transitions = \automTrans_1 \cup \dots \cup \automTrans_n$. 
A constraint that expresses that all the automata are in a final state at the same time is

\begin{align*}
    \varphi = \bigwedge_{\autom_i \in \{ \autom_1, \dots, \autom_n \}} \left( \bigvee_{p \in \automAcceptStates_i} p \geq 1 \right)
\end{align*}

\begin{restatable}{lemma}{LmHardness}
  Let $\autom_1, \dots, \autom_n$, $\csystem_{\autom_1, \dots, \autom_n}$ and $\varphi$ defined as above.
 The $\autom_i$ are assumed to be complete, i.e. for each state $q$ and letter $a$, there is a transition from $p$ reading $a$.
  Then the intersection between $\autom_1, \dots, \autom_n$ is non-empty if and only if a configuration $\conf{v}$ with $\conf{v} \models \varphi$ is reachable in $\csystem_{\autom_1, \dots, \autom_n}$.
\end{restatable}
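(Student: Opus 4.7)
The plan is to establish each direction of the equivalence by an explicit simulation between words in the intersection of the DFA languages and synchronized runs of $\csystem_{\autom_1, \dots, \autom_n}$, exploiting that completeness plus determinism turns each process's trajectory into a deterministic function of the sequence of chosen actions.

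For the forward direction, given $\word = a_1 a_2 \cdots a_m \in \bigcap_i L(\autom_i)$, I would pick the initial configuration $\conf{v}_0$ placing exactly one process in each $\automInitState_i$ and zero elsewhere, then fire the sequence of synchronization steps $a_1, \ldots, a_m$ in $\csystem_{\autom_1, \dots, \autom_n}$. Completeness of the $\autom_i$ guarantees that an $a_j$-step is enabled from every intermediate configuration (no process idles because every state has an outgoing $a_j$-transition), and determinism guarantees that the single process that started in $\automInitState_i$ ends in the unique state reached by $\autom_i$ on input $\word$. Since $\word \in L(\autom_i)$, this state lies in $\automAcceptStates_i$, so the resulting configuration covers a final state of every $\autom_i$ and satisfies $\varphi$.

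For the backward direction, take any run $\conf{v}_0 \trans{a_1} \conf{v}_1 \trans{a_2} \cdots \trans{a_m} \conf{v}_m$ with $\conf{v}_m \models \varphi$, and set $\word = a_1 a_2 \cdots a_m$. The first observation is that the sets $\automStates_i$ are pairwise disjoint (we may assume this by taking disjoint copies) and that $\Transitions = \automTrans_1 \cup \cdots \cup \automTrans_n$ contains no transition crossing the partition, so every process stays inside the automaton it started in. The second, by induction on $m$ using determinism and completeness, is that all processes that started in $\automInitState_i$ end in the single state obtained by reading $\word$ in $\autom_i$ from $\automInitState_i$. Hence the support of $\conf{v}_m$ restricted to $\automStates_i$ is either empty (when no process started in $\automInitState_i$) or the singleton containing that unique state. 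Since $\conf{v}_m \models \varphi$ requires some process in $\automAcceptStates_i$ for every $i$, the unique reached state must be accepting, that is, $\word \in L(\autom_i)$ for each $i$, so $\word$ lies in the intersection.

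The main obstacle is the backward direction: one must argue that synchronization, together with completeness and determinism, collapses all possible process trajectories within $\automStates_i$ into a single trajectory, so that coverage of a final state is equivalent to acceptance of $\word$ by $\autom_i$. Without completeness a process could idle whenever its current state lacks an outgoing $a_j$-transition, producing a mismatch between the sequence of fired actions and the word actually consumed by that process; without determinism distinct processes in the same automaton could follow different paths and cover a final state by accident, breaking the clean correspondence with $L(\autom_i)$.
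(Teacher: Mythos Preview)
Your argument is correct and follows the same approach as the paper: both directions rely on the observation that synchronization over a letter $a$ simulates feeding $a$ to all automata simultaneously, so a sequence of synchronization actions corresponds exactly to a common input word, and $\varphi$ encodes simultaneous acceptance. Your write-up is in fact more careful than the paper's brief sketch, making explicit why determinism and completeness together force every process in $\automStates_i$ onto the unique run of $\autom_i$ on $\word$, which is exactly the point the paper leaves implicit.
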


\begin{proof}
  In  $\csystem_{\autom_1, \dots,  \autom_n}$, each process starts in one of the initial states $\automInitState_i$ with $i \in \{1, \dots, n\}$ and can only progress by synchronizing over some symbol in $\automAlphabet$. 
  The semantics of synchronization forces all other processes to take a transition with the same label, which mimics inputting the respective symbol to all of the DFA simultaneously.
  $\varphi$ is satisfied if and only if for every automaton $\autom_1, \dots, \autom_n$ an accepting state is reached by at least one process, i.e., the executed sequence of actions corresponds to a word accepted by all automata. Consequently, such a run exists iff the intersection of the language of the DFA is non-empty.
  \qed 
\end{proof}
As the constructed $\csystem_{\autom_1, \dots, \autom_n}$ and $\varphi$ are polynomial in the inputs $\autom_1, \dots, \autom_n$, we obtain our hardness result.
We note that the construction does not use a controller process.

\begin{lemma}
    The CRP for fully $\wqo$-compatible systems without a controller process and with $\varphi \in CC[\geq 1]$ is \PSPACE-hard.
\end{lemma}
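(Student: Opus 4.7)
The plan is to invoke the reduction already constructed in the paragraph immediately preceding the statement: given DFAs $\autom_1, \dots, \autom_n$ over a common alphabet $\automAlphabet$, the CS $\csystem_{\autom_1, \dots, \autom_n}$ together with the cardinality constraint $\varphi = \bigwedge_i \bigvee_{p \in \automAcceptStates_i} \#p \geq 1$ encodes the intersection non-emptiness problem. I would simply verify that this reduction fits exactly within the restricted setting of the statement, and then appeal to the known \PSPACE-hardness of INT from \cite{conf/focs/Kozen77}.

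Concretely, I would argue in three short steps. First, note that $\csystem_{\autom_1, \dots, \autom_n}$ has no controller process, since by construction $\Loc = \automStates_1 \cup \dots \cup \automStates_n$ and $\Transitions = \automTrans_1 \cup \dots \cup \automTrans_n$ are purely user-side synchronization transitions. Second, observe that $\varphi$ uses only atomic propositions of the form $\#p \geq 1$, so $\varphi \in CC[\geq 1]$. Third, confirm that $\csystem_{\autom_1, \dots, \autom_n}$ is fully $\wqo$-compatible: by \cref{lm:compatible-all}, any CS induced by synchronization steps is fully $\wqo$-compatible, and this applies unchanged to the controller-free setting (as remarked in \cref{sec:small-tcs}, adding a trivial controller changes nothing). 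With these three observations, the correctness lemma of the reduction (\cref{LmHardness} above) yields that INT reduces to CRP in this restricted setting, and since the reduction is clearly polynomial-time in the sizes of the input automata, \PSPACE-hardness transfers.

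I do not anticipate a significant obstacle here, since all the nontrivial work has already been done: $\wqo$-compatibility of synchronization systems is established in \cref{lm:compatible-all}, the correctness of the encoding is the content of the preceding lemma, and the \PSPACE-hardness of INT is a classical result. The only thing to double-check is that the synchronization semantics used in \cref{sec:transition-types} (where processes lacking an $a$-transition simply stay put) matches the behavior needed for the DFA encoding — which it does, provided the $\autom_i$ are complete (as explicitly assumed in the lemma statement), so that every process always takes a genuine transition on every letter and thereby faithfully simulates its DFA.
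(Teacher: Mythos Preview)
Your proposal is correct and follows essentially the same approach as the paper: the paper simply notes that the construction is polynomial in the input automata and uses no controller, and concludes hardness from the preceding correctness lemma and \PSPACE-hardness of INT. Your write-up is slightly more explicit (spelling out $\varphi \in CC[\geq 1]$, invoking \cref{lm:compatible-all} for $\wqo$-compatibility, and noting the role of DFA completeness), but the argument is the same.
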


\section{Additional Details for \cref*{sec:trace}}\label{app:sec-trace}

\LmPMCPinfinite

\begin{proof}
    We directly get $\tracesinf{\csystem} \subseteq \tracesinf{\csystemabstract}$, since for infinite runs the $(0,1)$-abstraction may be an over-approximation of the possible behaviors.
    
    For the other direction we can prove a stronger property, for a restricted version of $\csystemabstract$:
    Let $\csystemabstract'$ be this modification, which is such that it never takes transitions $(c,\conf{v}^{\alpha}) \trans{} (c',\conf{v}'^{\alpha})$ such that $\conf{v}'^{\alpha}(q) < \conf{v}^{\alpha}(q)$ for any $q \in Q$. 
    That is, once a user state $q$ has been reached, it will always remain occupied.
    
    To see that this is always possible, note that whenever there is a step $(c,\conf{v}) \rightarrow (c',\conf{v'})$ based on a transition $q \trans{\existsGuard} q'$ with $q \in Q$ in $\csystem$, then there is also a step $(c,\conf{v}+\conf{q}) \rightarrow (c',\conf{v'}+\conf{q})$.
    Moreover, since $\alpha(c,\conf{v}) = \alpha(c,\conf{v}+\conf{q})$, there is a step $\alpha(c,\conf{v}) \rightarrow \alpha(c',\conf{v'}+\conf{q})$ in $\csystemabstract$, i.e., where $q$ remains occupied.
    
    As the runs of $\csystemabstract'$ always keep user states occupied once they have been reached, and guards are disjunctive, the transitions of the controller that are enabled will clearly be a superset of the transitions that are enabled on any run of $\csystemabstract$ with the same sequence of transitions. That is, $\tracesinf{\csystemabstract'} \supseteq \tracesinf{\csystemabstract}$.
    
    Moreover, even on infinite runs of $\csystemabstract'$, the vector $\conf{v}^{\alpha}$ will only change finitely often, until every $q \in Q$ that is visited in the infinite run has been visited for the first time.
    As $\csystem$ is a \disjunctive{}, all transitions of the controller that will ever be enabled are enabled at that point, and will stay enabled forever in a run where the user processes never move again.
    Based on this observation and the proof idea of \cref{lem:reachability}, it is easy to show that we also have $\tracesinf{\csystem} \supseteq \tracesinf{\csystemabstract'}$.
    \qed
\end{proof}

\section{Additional Details for \cref*{sec:small-tcs}}\label{app:small-tcs}

\ThmReachGeq

\begin{proof}
    Let $\csystem$ be a $\wqo$-compatible transition counter system,
    $\csystemabstract$ its 01-CS and $\varphi \in CC[\geq a]$.
    By \cref{lm:reduce}, the problem can be reduced to checking if
    there is a reachable configuration $\absconf{v}$ in $\csystemabstract$ that satisfies $\varphi_\alpha$.

    Consider the following algorithm:
    Start a run in the initial configuration $\conf{v}_0^\alpha$ containing the maximum number of ones,
    i.e., $\conf{v}_0^\alpha(q)=1$ iff $q \in Q_0$.
    By \cref{lm:abstract-system} it is possible to only take steps
    that do not decrease the set of states with ones.
    Suppose we have a run
    $\conf{v}_0^\alpha \trans{} \ldots \trans{} \conf{v}_i^\alpha$.
    If it exists, take a step to a $\conf{v}_{i+1}^\alpha$ such that \\
    1) \ $\conf{v}_{i+1}^\alpha(q)=1$ if $\conf{v}_i^\alpha(q)=1$ for all $q \in Q$, and \\
    2) \  there is at least one $q' \in Q$
    such that $\conf{v}_{i+1}^\alpha(q')=1$ and $\conf{v}_i^\alpha(q')=0$.\\
    Keep taking such steps until no longer possible, defining a run
    $\conf{v}_0^\alpha \trans{} \conf{v}_1^\alpha \trans{} \ldots \trans{} \conf{v}_n^\alpha$ in $\csystemabstract$.
    If $\conf{v}_n^\alpha \models \varphi_\alpha$ then the algorithm answers yes,
    otherwise it answers no.

    This is a polynomial time algorithm: there are at most $|Q|$ steps in the run,
    and choosing the next step is polynomial in $|\Dmin|$ by Lemma \ref{lm:abstract-system}:
    at $\conf{v}_i^\alpha$ go through the $D \in \Dmin$ until
    $\supp{\pre{D}} \subseteq \supp{\conf{v}_i^\alpha}$,
    then take the step to a $\conf{v}_{i+1}^\alpha$
    such that if $\conf{v}_i(q)=1$ then $\conf{v}_{i+1}(q)=1$.
    If the algorithm answers yes, there is a reachable configuration
    $\conf{v}_n^\alpha \models \varphi_\alpha$ in $\csystemabstract$.
    For the other direction,
    we first make the following claim.
    \begin{claim}
        We say a state $q \in Q$ is \emph{reachable} in $\csystemabstract$ if there exists
        a reachable $\conf{w}_i^\alpha$ such that $\conf{w}_i^\alpha(q)=1$.
        The $\conf{v}_n^\alpha$ given by a run of the algorithm is such that
        $\conf{v}_n^\alpha(q)=1$ for all $q$ reachable in $\csystemabstract$.
    \end{claim}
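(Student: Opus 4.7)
The plan is to prove by induction on the length of a run in $\csystemabstract$ that every reachable abstract configuration $\conf{w}^\alpha$ satisfies $\supp{\conf{w}^\alpha} \subseteq \supp{\conf{v}_n^\alpha}$. Since $\conf{v}_n^\alpha \in \{0,1\}^Q$, this inclusion is equivalent to the claim that $\conf{v}_n^\alpha(q)=1$ for every reachable $q$. The base case is immediate: any initial configuration $\conf{w}_0^\alpha$ of $\csystemabstract$ has $\supp{\conf{w}_0^\alpha} \subseteq Q_0 = \supp{\conf{v}_0^\alpha}$ by the definition of initial configurations, and condition (1) of the algorithm guarantees that $\supp{\conf{v}_0^\alpha} \subseteq \supp{\conf{v}_n^\alpha}$.

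For the inductive step, consider a reachable $\conf{w}^\alpha$ with $\supp{\conf{w}^\alpha} \subseteq \supp{\conf{v}_n^\alpha}$ and a step $\conf{w}^\alpha \trans{} \conf{w}'^\alpha$ in $\csystemabstract$. By \cref{lm:abstract-system}, this step is based on some $D \in \Dmin$ with $\supp{\pre D} \subseteq \supp{\conf{w}^\alpha}$, and one reads off that $\supp{\conf{w}'^\alpha} \subseteq \supp{\conf{w}^\alpha} \cup \supp{\post D}$. By the induction hypothesis $\supp{\pre D} \subseteq \supp{\conf{v}_n^\alpha}$, so $D$ is also enabled at $\conf{v}_n^\alpha$. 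The decisive observation is then: if $\supp{\post D} \not\subseteq \supp{\conf{v}_n^\alpha}$, then \cref{lm:abstract-system} lets us pick a successor of $\conf{v}_n^\alpha$ under $D$ in which every state of $\supp{\pre D} \setminus \supp{\post D}$ keeps its value $1$; this step preserves all ones of $\conf{v}_n^\alpha$ and strictly enlarges the support, contradicting the termination condition of the algorithm. Hence $\supp{\post D} \subseteq \supp{\conf{v}_n^\alpha}$, and therefore $\supp{\conf{w}'^\alpha} \subseteq \supp{\conf{v}_n^\alpha}$, closing the induction.

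The main obstacle is coordinating the two parts of the algorithm's step condition with the structure provided by \cref{lm:abstract-system}: enabling of a minimal step $D$ depends only on $\supp{\pre D}$ being covered, and is thus monotone in the $01$-support, while the freedom in the target configuration (whether to keep states in $\supp{\pre D} \setminus \supp{\post D}$ as $1$) is precisely what allows us to simulate an arbitrary run while preserving existing ones. Together, these two ingredients show that the algorithm's final configuration is a greedy fixed point that dominates the support of every reachable configuration in $\csystemabstract$.
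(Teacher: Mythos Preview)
Your proof is correct and follows essentially the same approach as the paper's: both argue by induction using \cref{lm:abstract-system} to show that any step enabled along a reachable run is also enabled at $\conf{v}_n^\alpha$, and then invoke the algorithm's termination condition to conclude that no new state can appear. The only cosmetic difference is that you induct on run length over whole configurations (showing $\supp{\conf{w}^\alpha} \subseteq \supp{\conf{v}_n^\alpha}$), whereas the paper inducts on the length of the shortest run reaching an individual state $q$; the underlying argument is identical.
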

    Now, suppose the algorithm answers no, i.e., $\conf{v}_n^\alpha \nvDash \varphi_\alpha$.
    Assume for the sake of contradiction that there exists a reachable $\conf{w}_m^\alpha$
    such that $\conf{w}_m^\alpha\models \varphi_\alpha$.
    This implies that there is a $q$ such that $\conf{w}_m^\alpha(q)=1$ but $\conf{v}_n^\alpha(q)=0$, contradicting the claim.

    We now prove the claim.
    Let $q$ be reachable in $\csystemabstract$.
    We reason by induction on the length $i$ of the shortest run $\conf{w}_0^\alpha \trans{}  \ldots \trans{} \conf{w}_i^\alpha$
    such that $q$ is reachable.
    If $i=0$, then  $\conf{w}_0^\alpha(q)=1$ implies $q\in Q_0$.
    Therefore $\conf{v}_0^\alpha(q)=1$ by definition,
    and since the run of the algorithm never decreases the set of states with ones, $\conf{v}_n^\alpha(q)=1$.
    Now suppose the claim is true for $i$, i.e., $\conf{v}_n^\alpha(q)=1$ for every state $q$ such that its shortest run has length at most $i$.
    We show that then it also holds for $i+1$.
    Let $\conf{w}_0^\alpha \trans{}  \ldots \trans{} \conf{w}_{i+1}^\alpha$ be the shortest run such that $q$ is reachable.
    By \cref{lm:abstract-system}, this implies that there exists a $D \in \Dmin$ such that
    $q\in \post{D}$ and all $q' \in \pre{D}$ are present at the previous step of the run, i.e.,  $\conf{w}_i^\alpha(q')=1$.
    By induction hypothesis, $\conf{v}_n^\alpha(q')=1$ for all $q' \in \pre{D}$,
    meaning that the step defined by $D$ is possible from $\conf{v}_n^\alpha$.
    By the algorithm definition, no more steps can be taken that add a new 1,
    so $\conf{v}_n^\alpha(q)$ is already equal to 1.
    \qed
\end{proof}

\ThmReachGeqZero

\begin{proof}
    Let $\csystem$ be a $\wqo$-compatible transition counter system, $\csystemabstract$ its 01-CS and $\varphi \in CC[\geq a, =0]$.
    By \cref{lm:reduce}, the problem can be reduced to checking if
    there is a reachable configuration $\absconf{v}$ in $\csystemabstract$ that satisfies $\varphi_\alpha$.

    Consider the following non-deterministic algorithm,
    where we guess a run in two parts.
    Informally, we first guess a prefix that increases the set of states with ones,
    then guess a suffix that decreases the set of states with ones.
    Guess an initial configuration $\conf{v}_0^\alpha$ (not necessarily with $\supp{v_0}=Q_0$).
    Guess a run $\conf{v}_0^\alpha \trans{} \ldots \trans{}\conf{v}_n^\alpha$ as in \cref{thm:reach-geqa},
    where each step $\conf{v}_i^\alpha \trans{}  \conf{v}_{i+1}^\alpha$ is such that \\
    1) \ $\conf{v}_{i+1}^\alpha(q)=1$ if $\conf{v}_i^\alpha(q)=1$ for all $q \in Q$, and \\
    2) \ there is at least one $q' \in Q$
    such that $\conf{v}_{i+1}^\alpha(q')=1$ and $\conf{v}_i^\alpha(q')=0$. \\
    As in \cref{thm:reach-geqa}, there are at most $|Q|$ such steps possible,
    but here we may choose to stop despite such steps remaining.
    Then we guess a run $\conf{v}_n^\alpha \trans{} \ldots \trans{} \conf{v}^\alpha_{n+m}$
    where each step $\conf{v}_i^\alpha \trans{}  \conf{v}_{i+1}^\alpha$ is such that \\
    3) \ $\conf{v}_{i+1}^\alpha(q)=0$ if $\conf{v}_i^\alpha(q)=0$ for all $q \in Q$, and \\
    4) \  there is at least one $q' \in Q$
    such that $\conf{v}_{i+1}^\alpha(q')=0$ and $\conf{v}_i^\alpha(q')=1$.  \\
    There are at also at most $|Q|$ such steps possible.
    If $ \conf{v}^\alpha_{n+m}\models \varphi_\alpha$ then the algorithm answers yes,
    otherwise it answers no.

    This is a polynomial time algorithm: there are at most $2|Q|$ steps in the run,
    and choosing the next step is polynomial in $|\Dmin|$ as argued in \cref{thm:reach-geqa}.
    Checking if $\conf{v}^\alpha_{n+m}$ satisfies $ \varphi_\alpha$ is also polynomial time.
    If the algorithm answers yes,  there is a reachable configuration that satisfies $\varphi_\alpha$.
    Suppose there exists a run $\conf{w}_0^\alpha \trans{}  \ldots \trans{} \conf{w}_l^\alpha \models \varphi_\alpha$.
    We can extract from it a  run satisfying $1),2)$ by modifying the steps to leave ones behind
    (which is possible by \cref{lm:abstract-system}) and removing loops,
    thus obtaining a run $\conf{w}_0'^\alpha \trans{} \conf{w}_1'^\alpha \trans{}  \ldots \trans{} \conf{w}_n'^\alpha$ of at most $|Q|$ steps,
    with $\conf{w}_0'^\alpha=\conf{w}_0^\alpha$ and $\conf{w}_n'^\alpha=\conf{w}_l^\alpha$.
    For all $q$ such that $\conf{w}_i^\alpha(q)=1$ for some $i \in\set{1,\ldots, l}$,
    $\conf{w}_n'^\alpha(q)=1$, and  $\conf{w}_n'^\alpha \ge \conf{w}_i^\alpha$ for all $i \in\set{1,\ldots, l}$.
    We now want to continue the run in a way satisfying $3),4)$ to empty all states $q$ such that $\conf{w}_n'^\alpha(q)=1$ but $\conf{w}_l^\alpha(q)=0$,
    i.e., $q \in \supp{\conf{w}_n'^\alpha}\setminus \supp{\conf{w}_l^\alpha}$.
    Let $i_0$ be the smallest index of the set
    $\set{\max_{\conf{w}_i^\alpha(q)=1} i \ | \ q \in \supp{\conf{w}_n'^\alpha}\setminus \supp{\conf{w}_l^\alpha}}$,
    i.e., the smallest index such that $\conf{w}_i^\alpha$ is the last index in the original run at which a state $q$ is filled, where $q$ is a state to be emptied.
    Let $D_0\in \Dmin$ such that $\conf{w}_{i_0}^\alpha \trans{D_0} \conf{w}_{i_0+1}^\alpha$.
    Let $\conf{w}_{n+1}'^\alpha$ be the configuration such that $\conf{w}_{n}'^\alpha \trans{D_0} \conf{w}_{n+1}'^\alpha$, which is possible since  $\conf{w}_n'^\alpha \ge \conf{w}_{i_0}^\alpha$.
    Now let $i_1$ be the smallest index of the set
    $\set{\max_{\conf{w}_i^\alpha(q)=1} i \ | \ q \in \supp{\conf{w}_{n+1}'^\alpha}\setminus \supp{\conf{w}_l^\alpha}}$.
    Note that $i_1 \ge i_0$ and
    let $D_1\in \Dmin$ such that $\conf{w}_{i_1}^\alpha \trans{D_1} \conf{w}_{i_1+1}^\alpha$.
    Let $\conf{w}_{n+2}'^\alpha$ be the configuration such that $\conf{w}_{n+1}'^\alpha \trans{D_1} \conf{w}_{n+2}'^\alpha$.
    We proceed in this way until reaching $\conf{w}_{n+m}'^\alpha$ equal to $\conf{w}_l^\alpha$.
    Run $\conf{w}_0'^\alpha \trans{} \dots \trans{} \conf{w}_{n+m}'^\alpha$
    is a valid run of the algorithm, and thus there exists an execution of it that answers yes.
    \qed
\end{proof}

\end{document}